\documentclass[12pt]{article}

\usepackage{amsmath,amsfonts,amsthm,amscd,amssymb,slashed}
\usepackage[pdftex]{graphicx}
\usepackage{algpseudocode}
\usepackage{tikz}
\usepackage{enumerate}
\usepackage[hidelinks]{hyperref}
\usepackage{mathtools}
\usetikzlibrary{shapes,arrows}
\usetikzlibrary{matrix,arrows}
\usepackage{epstopdf}
\usepackage{xcolor,import}
\usepackage{transparent}
\usepackage[all]{xy}

\usepackage{geometry}
\makeatletter

\@addtoreset{equation}{section}
\def\section{\@startsection {section}{1}{\z@}{-2.5ex plus -1ex minus
 -.2ex}{1.3ex plus .2ex}{\large\bf}}
\def\subsection{\@startsection{subsection}{2}{\z@}{-2.25ex plus%
 -1ex minus -.2ex}{0.5ex plus .2ex}{\bf}}

\newcommand{\R}{\mathbb{R}}

\newcommand{\C}{\mathbb{C}}
\newcommand{\Z}{\mathbb{Z}}
\newcommand{\CP}{\mathbb{C}P}

\newtheorem{thm}{Theorem}[section]
\newtheorem{prop}[thm]{Proposition}
\newtheorem{cor}[thm]{Corollary} 
\newtheorem{dfn}[thm]{Definition}
\newtheorem{lem}[thm]{Lemma}
\theoremstyle{definition}
\newtheorem{ex}[thm]{Example}

\numberwithin{equation}{section}
\newcommand{\ud}{\mathrm{d}}
\newcommand{\ui}{\mathrm{i}}
\newcommand{\NW}{\mathcal{N}}

\begin{document}
\parskip 6pt
\parindent 0pt

\vspace{.4cm}

\begin{center}
{\Large \bf Vortex Harmonic Spinors on the Nappi--Witten Space}

\vspace{0.4 cm}

{\bf Calum Ross}\\
{Department of Computer Science, Edge Hill University, Ormskirk L39 4QP, United Kingdom\\
Research and Education Center for Natural Sciences, Keio
University, Hiyoshi 4-1-1, Yokohama, Kanagawa 223-8521, Japan\\
calum.ross@edgehill.ac.uk}\\
\vspace{0.5cm}
{\bf Ra\'{u}l S\'{a}nchez Gal\'{a}n}\\
 4i Intelligent Insights, Tecnoincubadora Marie Curie, PCT Cartuja,
41092 Sevilla, Spain,\\
{r.sanchez@4i.ai }
\end{center}
\abstract{
We establish a correspondence between vortex equations on flat Riemann surfaces and harmonic spinors on the Nappi--Witten space, the group manifold of a central extension of the Euclidean group $SE(2)$. Vortex configurations lift naturally to this setting, producing explicit solutions of a twisted Dirac equation. Using the conformal flatness of the Nappi--Witten metric, these solutions induce harmonic spinors on four-dimensional Minkowski space. This yields a geometric construction of Abelian magnetic zero-modes on flat Minkowski spacetime from vortex data.
}

\tableofcontents

\section{Introduction}
Vortices are examples of solitons which occur in the Abelian-Higgs model and its variants \cite{MS}. They are the minimisers of Abelian-Higgs type energy functionals in a given topological sector, and their topological charge is the winding number, which is equal to the number of zeros, counted with multiplicity, of the scalar Higgs field. In \cite{Manton:2016waw}, a unifying picture is given of five different types of vortices in terms of generalisations of the Abelian-Higgs model. In a series of papers \cite{Ross2021,RS2,RS1}, all of these vortices were related to the geometry of three-dimensional Lie groups. There was also a link established between Abelian vortices and magnetic zero-modes, these are spinors which are harmonic with respect to a Dirac operator which has been twisted by an Abelian gauge field. The Lie groups that are related to vortices are: $SU(2)$ the special unitary group, $SU(1,1)$ the special pseudo unitary group, and $SE(2)$ the Euclidean group in two dimensions. Most of the vortices from \cite{Manton:2016waw} are related to $SU(1,1)$, the hyperbolic, Bradlow, and Ambj{\o}rn-Olesen vortices, while Popov vortices are related to $SU(2)$. In both cases, a bi-invariant metric on the group manifold and an associated Dirac operator can be constructed. However, for Jackiw--Pi vortices and Laplace vortices\footnote{Laplace vortices were not considered in \cite{Manton:2016waw} as they correspond to the pair of a flat connection and a covariantly holomorphic section and there are no nontrivial solutions with finite energy. However, in \cite{Contatto:2017alh} it was pointed out that the Laplace vortex equations are still perfectly sensible to consider.}, the group manifold is $SE(2)$, which has a degenerate Killing form and thus a degenerate metric, meaning that we cannot consider harmonic spinors in the same way.\\

A resolution comes in the form of centrally extending the Lie group. It is well known that centrally extending $SE(2)$ leads to the Nappi--Witten or diamond Lie group \cite{II,NW}, which admits a family of invariant Lorentzian metrics. The lifting construction for vortices used in \cite{Ross2021} extends to the central extension, and thus there are still harmonic spinors corresponding to Jackiw--Pi and Laplace vortices. In this paper we carry out this generalisation and demonstrate what vortex harmonic spinors on Nappi--Witten look like. We then make use of the fact that the Nappi--Witten space is conformally flat to construct vortex harmonic spinors on Minkowski space.\\

The paper is organised as follows. In Section~\ref{sec: NW and Dirac} we introduce our conventions, the Nappi--Witten space, and construct a Dirac operator. In Section~\ref{sec: vortices} we review the theory of Jackiw--Pi and Laplace vortices and show how to lift them to vortex configurations on the Nappi--Witten space. In Section~\ref{sec: vortex harmonic spinors on NW} we provide the details of our results on vortex harmonic spinors on the Nappi--Witten space. Section~\ref{sec: vortex harmonic spinors on Minkowski} reviews the conformal relationship between the Nappi--Witten space and four-dimensional Minkowski space and shows how vortex harmonic spinors on Nappi--Witten give rise to harmonic spinors on Minkowski. 
Finally, Section~\ref{Sec: summary} summarises our results and discusses some directions for further work.
\section{The Nappi--Witten space and its Dirac operators}
\label{sec: NW and Dirac}
\subsection{The Nappi--Witten space} \label{ssec:Group_conventions}
The Nappi--Witten space, $\NW$, is a four-dimensional Lie group first introduced in \cite{NW}. It is also known as the diamond Lie group and is an example of a plane wave spacetime. Its Lie algebra is a solvable four-dimensional Lie algebra obtained as a central extension of the Lie algebra of $SE(2)$. We work with the generators $P_{1},P_{2},J,T$, which satisfy the commutation relations
\begin{equation}
[J,P_{i}]=\epsilon_{ij}P_{j}, \quad [P_{i},P_{j}]=\epsilon_{ij}T,
\end{equation}
with all the other commutators vanishing. From a physics perspective, the $P_{i}$ generate translations in the coordinates $x, y$, the element $J$ generates rotations in the plane spanned by $ x$ and $y$, and $T$ is a central element. Since the Nappi--Witten Lie algebra is solvable, its Killing form is degenerate. However, it has the following two-parameter family of non-degenerate
invariant symmetric bilinear forms \cite{Cangemi_1992,NW}, whose Gram
matrix in the basis \((P_1,P_2,J,T)\) is
\begin{equation} 
    \langle \cdot , \cdot \rangle_{k,b}
    =
    k\begin{pmatrix}
        1&0&0&0\\
        0&1&0&0\\
        0&0&b&1\\
        0&0&1&0
    \end{pmatrix}.
    \label{eq: NW quadratic form}
\end{equation}
Here we take \(k=1\) and denote this bilinear form simply by
\(\langle \cdot , \cdot \rangle_b\). It is possible to set $b=0$ using a suitable change of basis, see equation \eqref{eq:t_translate}. We keep $b$ in some of the formulas below for generality, and specialise to $b=0$ when needed.

The Nappi--Witten space is diffeomorphic to a cylinder $\R^{3}\times S^{1}$, and any element $h \in \NW$ can be written using the exponential map as
\begin{equation}
    h(x,y,\theta,t)=e^{xP_{1}+yP_{2}}e^{\theta J +t T},
    \label{eq: Nw element}
\end{equation}
where $\theta\in S^{1}$ and $x,y,t\in \R$. Note that in some references, a different choice is made for how to write elements of $\NW$ in terms of the exponential map. However, these are all related by multiplication by an element of $\NW$.\\

The Maurer--Cartan form of the Nappi--Witten space is given by
\begin{equation}
\begin{split}
    h^{-1}dh =&\left(\cos\theta\ud x +\sin\theta \ud y\right)P_{1}+\left(\cos\theta \ud y -\sin\theta \ud x\right)P_{2}+\ud \theta J\\
    &+\left(\ud t -\frac{1}{2}x\ud y +\frac{1}{2}y\ud x\right)T.
    \end{split}
\end{equation}
From this, we can read off the left-invariant one-forms as
\begin{equation}
\label{eq:lf one forms}
    \begin{split}
        \sigma^{1}&= \cos\theta\ud x +\sin\theta \ud y,\\
        \sigma^{2}&= \cos\theta \ud y -\sin\theta \ud x,\\
        \sigma^{3}&=\ud \theta,\\
        \sigma^{4}&= \ud t -\frac{1}{2}x\ud y +\frac{1}{2}y\ud x.
    \end{split}
\end{equation}

The above family of Ad-invariant symmetric bilinear forms on the Lie algebra give rise to bi-invariant metrics on the Nappi--Witten space with signature $(-,+,+,+)$:
\begin{equation}
    ds^2 = \langle h^{-1}dh, h^{-1}dh \rangle_b.
\end{equation}

In the above left-invariant one-form basis, one obtains
\begin{equation}
    ds^2 = (\sigma^1)^2 + (\sigma^2)^2 + b (\sigma^3)^2 + 2 \,\sigma^3 \sigma^4,
\end{equation}
and hence in the $(x,y,\theta,t)$-coordinates,
\begin{equation}
    ds^{2}=\ud x^{2}+\ud y^{2}+2\ud \theta \ud t +(y\ud x-x\ud y)\ud \theta +b\ud \theta^{2}.
    \label{eq: NW metric}
\end{equation}
These metrics were first derived in \cite{NW} and then discussed in some detail in \cite{FS}. A discussion of central extensions of this kind and their relevance to physics is given in \cite{PS}.

Raising the index of the left-invariant one-forms \(\sigma^i\) with respect
to the metric \(ds^2\), we obtain the associated left-invariant vector fields
\(Y_i=(\sigma^i)^\sharp\):
\begin{equation}
    \begin{split}
        Y_{1}&=\cos\theta\partial_{x}+\sin\theta\partial_{y}+\frac{1}{2}\left(-y\cos\theta+x\sin\theta\right)\partial_{t},\\
        Y_{2}&= -\sin\theta\partial_{x}+\cos\theta\partial_{y}+\frac{1}{2}\left(x\cos\theta+y\sin\theta\right)\partial_{t},\\
        Y_{3}&=\partial_{t},\\
        Y_{4}&= \partial_{\theta}-b\partial_{t}.
    \end{split}
    \label{eq: left-invariant vfs}
\end{equation}
The commutation relations between the $Y_{i}$ match those of the Lie algebra of the Nappi--Witten Lie group. \\

To obtain an orthonormal frame via the Gram-Schmidt procedure, we take the frame 
\begin{equation}
    Y_{1}, \quad Y_{2}, \quad Y_{3}-Y_{4}, \quad Y_{3}+Y_{4},
\end{equation}
which contains no lightlike vector field ($\langle Y_{3},Y_{3}\rangle=0$), and use the Gram-Schmidt procedure to turn this frame into the orthonormal frame
\begin{equation}
    \begin{split}
    U_{0}&=\frac{1}{\sqrt{b+2}}\left(-\partial_{\theta}+\left(b+1\right)\partial_{t}\right),\\
        U_{1}&=Y_{1},\\
        U_{2}&=Y_{2},\\
        U_{3}&=\frac{1}{\sqrt{b+2}}\left(\partial_{\theta}+\partial_{t}\right).
    \end{split}
    \label{eq:orthonormal basis of vectors}
\end{equation}
The commutation relations in this frame are
\begin{align}\label{Comm_U}
    [U_{0},U_{i}]&=-\frac{\epsilon_{ij}}{\sqrt{b+2}}U_{j}, \quad i,j =1,2 \notag\\
    [U_{3},U_{i}]&=\frac{\epsilon_{ij}}{\sqrt{b+2}}U_{j}, \quad i,j =1,2 \\
    [U_{1},U_{2}]&=\frac{1}{\sqrt{b+2}}\left(U_{0}+U_{3}\right).\notag
\end{align}
The non-zero pairings of the above left-invariant one-forms and this  orthonormal basis are
\begin{equation}
\begin{split}
    \sigma^{1}(U_{1})&=1, \qquad \sigma^{2}(U_{2})=1,\\ 
    \sigma^{3}(U_{0})&=-\frac{1}{\sqrt{b+2}}, \quad \sigma^{3}(U_{3})=\frac{1}{\sqrt{b+2}},\\
     \sigma^{4}(U_{0})&=\frac{b+1}{\sqrt{b+2}},\quad \sigma^{4}(U_{3})=\frac{1}{\sqrt{b+2}}.
\end{split}
\end{equation}

The coframe dual to the orthonormal frame
\((U_0,U_1,U_2,U_3)\), satisfying $U^\mu(U_\nu)=\delta^\mu_{\nu}$ is given by
\begin{equation}
\begin{aligned}
U^0 &= \frac{-\sigma^3+\sigma^4}{\sqrt{b+2}}, 
&\qquad U^1 &= \sigma^1,\\
U^2 &= \sigma^2,
&\qquad U^3 &= \frac{(b+1)\sigma^3 + \sigma^4}{\sqrt{b+2}},
\end{aligned}
\end{equation}
and the metric takes the diagonal form
\begin{equation}
ds^2 = -(U^0)^2+(U^1)^2+(U^2)^2+(U^3)^2.
\end{equation}
\begin{prop} In terms of the orthonormal coframe $\{U^{\mu}\}$, the Levi-Civita connection matrix of one-forms $\omega^{\mu}_{\;\;\nu}$ has the form
    \begin{equation}
       \left(\omega^{\mu}_{\;\;\nu} \right)=\frac{1}{2\sqrt{b+2}}\begin{pmatrix}
           0&-U^{2}&U^{1}&0\\
           -U^{2}&0&U^{0}-U^{3}&U^{2}\\
           U^{1}&-U^{0}+U^{3}&0&-U^{1}\\
           0&-U^{2}&U^{1}&0
       \end{pmatrix}.
       \label{eq:NW levi-civita connection matrix}
    \end{equation}
\end{prop}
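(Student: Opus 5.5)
The plan is to use Cartan's first structure equation with zero torsion,
\begin{equation*}
\ud U^{\mu} = -\omega^{\mu}_{\;\;\nu}\wedge U^{\nu},
\end{equation*}
together with metric compatibility in the orthonormal coframe. With $\eta=\mathrm{diag}(-1,1,1,1)$, compatibility says $\omega_{\mu\nu}=\eta_{\mu\rho}\omega^{\rho}_{\;\;\nu}$ is antisymmetric. Hence $\omega^{0}_{\;\;i}=\omega^{i}_{\;\;0}$ and $\omega^{i}_{\;\;j}=-\omega^{j}_{\;\;i}$ for $i,j\in\{1,2,3\}$, which matches the symmetric first row and column and antisymmetric spatial block of the claimed matrix. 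By the uniqueness of the Levi-Civita connection, a torsion-free metric connection is unique. So it is enough to check that the proposed $\omega$ has this symmetry, which is evident by inspection, and that it satisfies the structure equations.

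First I compute the exterior derivatives of the coframe. The left-invariant forms obey Maurer--Cartan equations that can be read off from the commutators, or computed directly from the coordinate expressions for $\sigma^i$:
\begin{equation*}
\ud\sigma^1=\sigma^3\wedge\sigma^2,\quad \ud\sigma^2=-\sigma^3\wedge\sigma^1,\quad \ud\sigma^3=0,\quad \ud\sigma^4=-\sigma^1\wedge\sigma^2 .
\end{equation*}
For instance, $\ud\sigma^4=-\ud x\wedge \ud y=-\sigma^1\wedge\sigma^2$. Next I substitute $\sigma^3=(U^3-U^0)/\sqrt{b+2}$, which follows by inverting the formulas for $U^0$ and $U^3$. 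This gives
\begin{equation*}
\ud U^1=\tfrac{1}{\sqrt{b+2}}(U^3-U^0)\wedge U^2,\qquad \ud U^2=-\tfrac{1}{\sqrt{b+2}}(U^3-U^0)\wedge U^1,
\end{equation*}
\begin{equation*}
\ud U^0=\ud U^3=-\tfrac{1}{\sqrt{b+2}}U^1\wedge U^2 .
\end{equation*}
As a cross-check, these can be compared with the commutators \eqref{Comm_U} through $\ud U^\mu(U_\alpha,U_\beta)=-U^\mu([U_\alpha,U_\beta])$.

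The remaining step is to verify the four equations $\ud U^\mu=-\omega^\mu_{\;\;\nu}\wedge U^\nu$ row by row with the claimed matrix.

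\emph{Row $0$:} $-\frac{1}{2\sqrt{b+2}}(-U^2\wedge U^1+U^1\wedge U^2)=-\frac{1}{\sqrt{b+2}}U^1\wedge U^2$, as required. Row $3$ is identical.

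\emph{Row $1$:} $-\frac{1}{2\sqrt{b+2}}\big(-U^2\wedge U^0+(U^0-U^3)\wedge U^2+U^2\wedge U^3\big)=\frac{1}{\sqrt{b+2}}(U^3-U^0)\wedge U^2$, as required. Row $2$ works out analogously.

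If one prefers to derive $\omega$ rather than verify it, Koszul's formula in the form $\omega_{\mu\nu\rho}=\tfrac12(c_{\mu\nu\rho}-c_{\nu\rho\mu}+c_{\rho\mu\nu})$ built from the structure constants in \eqref{Comm_U} gives the same result.

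The main obstacle is bookkeeping of signs: the Lorentzian index lowering (the $0$-row and column are symmetric rather than antisymmetric), the convention $\omega^\mu_{\;\;\nu}\wedge U^\nu$ versus $U^\nu\wedge\omega$, and the sign in $\ud\sigma^4$. I would fix these by checking row $0$ first, since it involves only the simplest form $\ud U^0$.
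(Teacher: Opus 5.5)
Your proposal is correct, and it takes a different route from the paper.

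\textbf{The paper's route.} The paper derives $\omega$ directly. Because the metric is bi-invariant and the frame $\{U_\mu\}$ is left-invariant, $\nabla_{U_\mu}U_\nu=\tfrac12[U_\mu,U_\nu]$. So $\omega^{\kappa}_{\;\;\nu}(U_\mu)$ is half the structure constant read off from \eqref{Comm_U}. For example, $\nabla_{U_1}U_2=\frac{1}{2\sqrt{b+2}}(U_0+U_3)$ gives the $U^1$ entries in positions $(0,2)$ and $(3,2)$.

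\textbf{Your route.} You compute $\ud U^\mu$ from the Maurer--Cartan equations of the $\sigma^i$. You then check that the proposed matrix satisfies the torsion-free equation $\ud U^\mu=-\omega^{\mu}_{\;\;\nu}\wedge U^\nu$ and has the $\mathfrak{so}(1,3)$ symmetry, and conclude by uniqueness of the Levi-Civita connection. Your expressions for $\ud U^\mu$ and your sign convention (consistent with $\nabla U_\nu=\omega^{\mu}_{\;\;\nu}\otimes U_\mu$) are right. All four row checks work, including row $2$, which gives $-\frac{1}{\sqrt{b+2}}(U^3-U^0)\wedge U^1=\ud U^2$.

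\textbf{Comparison.} Your argument never uses bi-invariance, so it would apply to any frame of any metric. Because you compute $\ud\sigma^i$ from coordinates, it also independently confirms the commutators \eqref{Comm_U} rather than relying on them. The cost is that it verifies a given answer rather than producing one, unless you actually carry out the Koszul-formula derivation you mention. The paper's route is shorter and generative. It also explains the shape of the answer: $\omega$ is half the structure constants, which is why every nonzero entry carries the common factor $\frac{1}{2\sqrt{b+2}}$.
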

\begin{proof}
    The connection matrix $\omega^{\mu}_{\;\;\nu}$ of the Levi-Civita connection $\nabla$ is valued in $\mathfrak{so}(1,3)$ and in the frame $\{U_{\mu}\}$ is given by
    \begin{equation}
        \nabla U_{\nu}=\sum_{\mu}\omega^{\mu}_{\;\;\nu}\otimes U_{\mu}.
    \end{equation}

    Since the frame is left-invariant and the Levi-Civita connection comes from a bi-invariant metric, the covariant derivatives can be written as $\nabla_{U_\mu}U_\nu=\frac{1}{2}[U_\mu,U_\nu]$. Using the commutation relations \eqref{Comm_U}, one can directly compute the components to find the expression in equation \eqref{eq:NW levi-civita connection matrix}.
\end{proof}

\subsection{The Dirac operator on Nappi--Witten space}
As every Lie group is parallelizable, the tangent bundle of $\mathcal{N}$ is trivial. Consequently, all of the Stiefel--Whitney classes vanish, in particular, the second class vanishes, and $\mathcal{N}$ admits spin structures \cite{lawson2016spin}. The set of spin structures
forms an affine space over $H^1(\mathcal{N},\mathbb{Z}_2)$. Using the universal coefficient theorem and Hurewicz's theorem, one can show that $H^1(\mathcal{N},\mathbb{Z}_2)\cong \mathrm{Hom}\bigl(\pi_1(\mathcal{N}),\mathbb{Z}_2\bigr)$ and, since $\pi_1(\mathcal{N}) = \Z$, we conclude that there are two inequivalent spin structures on $\mathcal{N}$.
We work primarily with the trivial spin structure, for which the spinor bundle is globally trivial and spinors are globally defined smooth functions $\Psi:\mathcal{N}\to\mathbb{C}^4$. The nontrivial spin structure may be obtained by twisting the trivial bundle with a flat complex line bundle $L_{-}$ over $S^1$ whose holonomy is $-1$. Equivalently, this may be realised by a flat $U(1)$-connection represented locally by $\frac12\,d\theta$. Spinors in this nontrivial spin structure satisfy
$\Psi(\theta+2\pi)=-\Psi(\theta)$.

The Dirac operator acts on spinors $\Psi:\mathcal{N}\to\mathbb{C}^4$
as
\begin{equation}
D \Psi=   \sum_{\mu=0}^3 \gamma^\mu \nabla_{U_\mu} \Psi,
\end{equation}
where \(\nabla \) is the spinor covariant derivative, which is built from the Levi-Civita connection via the spin representation,
 $\{U_{\mu}\}$ is a global orthonormal frame and the gamma matrices satisfy $\{\gamma^{\mu},\gamma^{\nu}\}=-2g^{\mu\nu} I_4$, with $\left(g^{\mu\nu}\right)=\text{diag}\left(-1,1,1,1\right)$. We write a spinor as 
\begin{equation}
\Psi = \sum_{i=1}^{4} f_{i}\,\Psi_{i},
\end{equation}
where $f_{i}:\mathcal{N}\to\mathbb{C}$ are smooth functions and 
$\{\Psi_{i}\}$ is the standard basis of $\mathbb{C}^{4}$. The Dirac operator is 
 \begin{equation}    D\left(\sum_{i}f_{i}\Psi_{i}\right)=\sum_{i}\left(\sum_{\mu}\gamma^{\mu} U_{\mu}(f_{i})\Psi_{i} + f_{i}D\Psi_{i}\right).
    \label{eq:NW_dirac_operator}
\end{equation}

We take the four gamma matrices to be in the Weyl  representation,
\begin{equation}
\begin{split}
    \gamma^{0}&=\begin{pmatrix}
        0&0&1&0\\
        0&0&0&1\\
        1&0&0&0\\
        0&1&0&0
    \end{pmatrix}, \qquad \gamma^{1}=\begin{pmatrix}
        0&0&0&1\\
        0&0&1&0\\
        0&-1&0&0\\
        -1&0&0&0
    \end{pmatrix}\\
    \gamma^{2}&=\begin{pmatrix}
        0&0&0&-\ui\\
        0&0&\ui&0\\
        0&\ui&0&0\\
        -\ui&0&0&0
    \end{pmatrix}, \qquad \gamma^{3}=\begin{pmatrix}
        0&0&1&0\\
        0&0&0&-1\\
        -1&0&0&0\\
        0&1&0&0
    \end{pmatrix},
    \end{split}
\end{equation}
 and use the global orthonormal frame of left-invariant
vector fields defined in \eqref{eq:orthonormal basis of vectors}. With these conventions, 
\begin{equation}
\sum_{\mu}\gamma^{\mu}U_{\mu}(f_{i})=
\begin{pmatrix}
0 & 0 & U_0 + U_3 & U_1 - \ui U_2 \\
0 & 0 & U_1 + \ui U_2 & U_0 - U_3 \\
U_0 - U_3 & -\,U_1 + \ui U_2 & 0 & 0 \\
-\,U_1 - \ui U_2 & U_0 + U_3 & 0 & 0
\end{pmatrix}\!(f_i)
\end{equation}
and the action of the Dirac operator on the basis spinors is
\begin{equation}
D \Psi_i= \sum_\mu \gamma^\mu \nabla_\mu \Psi_i= \frac{1}{4}\sum_{\mu \nu \rho} \gamma^\mu \omega_{\nu\rho} (U_\mu)\gamma^{\nu}\gamma^{\rho}\Psi_i.
\end{equation}
Here, the lowered connection one-forms are given by $\omega_{\nu\rho} =\sum_\kappa \omega^\kappa_{\,\rho} g_{\kappa \nu}$ and satisfy the skew-symmetry property $\omega_{\nu\rho} = - \omega_{\rho \nu}$. Because each entry $\omega_{\nu\rho}$ is itself a one-form, the spin connection coefficients $\omega_{\nu\rho}(U_\mu)$ are obtained by evaluating these one-forms on the frame vectors. A direct computation then yields
     \begin{align}
     D\Psi_{i} &= \frac{\ui}{2\sqrt{b+2}} \begin{pmatrix} 0 & 0 & 0 & 0 \\ 0 & 0 & 0 & 3 \\ -3 & 0 & 0 & 0 \\ 0 & 0 & 0 & 0 \end{pmatrix} \Psi_{i}.
\end{align}

\emph{Harmonic spinors} are spinors $\Psi$ such that $D\Psi =0$; they are sometimes also called \emph{zero-modes}. 
In our case, a spinor $\Psi = \sum_{i=1}^{4} f_{i}\,\Psi_{i}$ is harmonic if
\begin{equation}
\begin{aligned}
    (\sqrt{2}\,\partial_t) f_3 + 2 e^{i\theta}\!\left(\partial_z - \tfrac{i}{4}\bar{z}\,\partial_t\right) f_4 &= 0 \\[10pt]
    2 e^{-i\theta}\!\left(\partial_{\bar{z}} + \tfrac{i}{4}z\,\partial_t\right) f_3 + \left(-\sqrt{2}\,\partial_\theta + \tfrac{3i}{2\sqrt{2}}\right) f_4 &= 0 \\
    \left(-\sqrt{2}\,\partial_\theta - \tfrac{3i}{2\sqrt{2}}\right) f_1 + 2 e^{i\theta}\!\left(-\partial_z + \tfrac{i}{4}\bar{z}\,\partial_t\right) f_2 &= 0 \\
    -2 e^{-i\theta}\!\left(\partial_{\bar{z}} + \tfrac{i}{4}z\,\partial_t\right) f_1 + (\sqrt{2}\,\partial_t) f_2 &= 0
\end{aligned}
\end{equation}
where  
$z := x + i y$, 
$\bar{z} := x - i y$,  
 $\partial_{z} = \tfrac{1}{2}\,(\partial_{x} - i\,\partial_{y})$,  
$\partial_{\bar{z}} = \tfrac{1}{2}\,(\partial_{x} + i\,\partial_{y})$.
One can easily find solutions that depend only on $\theta$. In this case, the system reduces to two uncoupled first-order ODEs for $f_1$ and $f_4$, while $f_2$ and $f_3$ are left completely unconstrained. One easily finds that the local general solution is
\begin{equation}
\Psi(\theta)=
\begin{pmatrix}
C_1 e^{-3i\theta/4}\\
g_2(\theta)\\
g_3(\theta)\\
C_4 e^{3i\theta/4}
\end{pmatrix},
\end{equation}
where $C_1,C_4\in\mathbb C$ and $g_2,g_3$ are arbitrary smooth functions of $\theta$. Therefore, for the trivial spin structure, the global $\theta$-dependent harmonic spinors are obtained by setting $C_1=C_4=0$ and requiring $g_2$ and $g_3$ to be $2\pi$-periodic, while for the nontrivial spin structure they are obtained by setting $C_1=C_4=0$ and requiring $g_2$ and $g_3$ to be $2\pi$-anti-periodic.

If the spin bundle is twisted by a line bundle equipped with the Abelian connection $A$ then the twisted Dirac operator $D_{A}$ has the same form as in equation \eqref{eq:NW_dirac_operator} but with $U_{\mu}$ replaced by $U_{\mu}+iA_{\mu}$.
Since twisting by a line bundle is equivalent to coupling to a magnetic field, spinors that are harmonic with respect to a twisted Dirac operator are sometimes called \emph{magnetic zero-modes}. The case of magnetic zero-modes and their applications to physics has been well studied with a primary focus on $\R^{3}$ starting in \cite{LY} and in subsequent work including \cite{Adam:1999mq,Dunne:2008qn,Erdos:2001,Min:2009cc}.
\section{Vortices}
\label{sec: vortices}
\subsection{Geometric conventions}\label{sec: Geometric conventions}
We work on the Riemann surface
$M_{0}=\C$ which has metric and K{\"a}hler form
\begin{align}
    ds^{2}_{M_0}&=4\left[\left(\ud x^{1}\right)^{2}+\left(\ud x^{2}\right)^{2}\right]=(e^{1})^{2}+(e^{2})^{2},\\
    \omega&=e^{1}\wedge e^{2}=4\ud x^{1}\wedge\ud x^{2}.
\end{align}
For us, $M_{0}$ is always assumed to be $\C$, since JP vortices on a flat torus correspond to vortices on $\C$, invariant under the action of a discrete subgroup, the lattice defining the torus. The choice of the overall factors of $4$ in both the metric and the volume form is so that the conventions here agree with those in \cite{Contatto:2017alh,Ross2021,RS1} for the general case.
We write these in terms of a complex coframe $\{e, \bar{e} \}$ with 
\begin{equation}
    e=2\ud z=e^{1}+\ui e^{2}.
\end{equation}

In \cite{Ross2021} and related work, vortices were lifted to the Lie groups $SU(2),\, SU(1,1)$, and $SE(2)$ by using the fact that all three groups are circle bundles over Riemann surfaces, $\CP^{1}, H^{2}$, and $\C$ respectively. Here we view $\NW$ as a trivial bundle over $\C$ with fibre $S^{1}\times \R$, the bundle projection is the semi-Riemannian submersion
\begin{equation}
    \pi: \NW \to M_{0}, \qquad (x,y,\theta,t)\mapsto \left(\frac{y}{2},-\frac{x}{2}\right)=(x^{1},x^{2}).
\end{equation}
The section 
\begin{equation}
    s:(x^{1},x^{2})\mapsto (-2x^{2},2x^{1},0,0)=(x,y,\theta,t),
\end{equation}
gives an isometric embedding of $(M_0,ds^2_{M_0})$ into $(\mathcal{N},ds^2)$.  
At the level of the group element $h\in \NW$ this projection is $\frac{1}{2}$ of the $P_1$, $P_2$ components of $hJh^{-1}$, and is an analogue of the Hopf projection. 

The next proposition gives the explicit relation between the  coframe $e^{1},e^{2}$ on $M_{0}$ and the left-invariant one-forms $\sigma^1, \sigma^2$ on $\NW$.

\begin{prop}
The pullback of the coframe on $M_{0}$ via $\pi:\NW\to M_{0}$ yields, 
\begin{equation}\label{eq:pullbackcoframe}
\pi^{*}\begin{pmatrix}
e^{1}\\
e^{2}
\end{pmatrix} =R\left(-\frac{\pi}{2}\right) R\left(\theta\right)\begin{pmatrix}
\sigma^{1}\\
\sigma^{2}
\end{pmatrix},
\end{equation}
where $R(\theta)$ is a rotation by $\theta$.
\end{prop}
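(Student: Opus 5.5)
The plan is a direct computation: express both sides in the coordinate one-forms $\ud x,\ud y$ and compare. No earlier results are needed beyond the definitions of $\pi$, the coframe $e^1,e^2$ on $M_0$, and the left-invariant one-forms \eqref{eq:lf one forms}.

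\emph{Left-hand side.} From $ds^2_{M_0}=(e^1)^2+(e^2)^2$ with $e=2\ud z$, we read off $e^1=2\ud x^1$ and $e^2=2\ud x^2$. The projection $\pi$ sets $x^1=y/2$ and $x^2=-x/2$. Pulling back therefore gives
\begin{equation*}
\pi^*\begin{pmatrix} e^1\\ e^2\end{pmatrix}=\begin{pmatrix} \ud y\\ -\ud x\end{pmatrix}.
\end{equation*}
This step uses only the chain rule, since the projection is linear in $x,y$ and independent of $\theta$ and $t$.

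\emph{Right-hand side.} From \eqref{eq:lf one forms}, the pair $(\sigma^1,\sigma^2)^T$ is obtained from $(\ud x,\ud y)^T$ by the matrix $\begin{pmatrix}\cos\theta&\sin\theta\\-\sin\theta&\cos\theta\end{pmatrix}$. With the convention $R(\theta)=\begin{pmatrix}\cos\theta&-\sin\theta\\ \sin\theta&\cos\theta\end{pmatrix}$, this matrix is $R(-\theta)$. Hence $R(\theta)(\sigma^1,\sigma^2)^T=(\ud x,\ud y)^T$; the $\theta$-dependence cancels because rotations by $\theta$ and $-\theta$ compose to the identity. Finally, $R(-\pi/2)=\begin{pmatrix}0&1\\-1&0\end{pmatrix}$ sends $(\ud x,\ud y)^T$ to $(\ud y,-\ud x)^T$, which matches the left-hand side.

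The only point requiring care is sign and orientation bookkeeping. One must fix the convention that $R(\theta)$ denotes the counterclockwise rotation, and check that the factor of $2$ in $e^i=2\ud x^i$ cancels the factor $\tfrac12$ in $\pi$. There is no genuine obstacle. As a consistency check, the result should be compatible with the statement that $s$ is an isometric embedding: at $\theta=0$ the section pulls back $\sigma^1,\sigma^2$ to $\ud x,\ud y$, which is consistent with $\pi\circ s=\mathrm{id}$.
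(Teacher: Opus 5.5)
Your proposal is correct and is essentially the paper's own direct computation: both compute $\pi^*(e^1,e^2)^T=(\ud y,-\ud x)^T$ and match it to $(\sigma^1,\sigma^2)^T$ using the counterclockwise convention for $R$. The only cosmetic difference is that the paper applies $R(-\theta)R(\pi/2)$ to the left-hand side and then inverts, whereas you evaluate the right-hand side $R(-\pi/2)R(\theta)(\sigma^1,\sigma^2)^T$ directly.
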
 
The proof of this is a direct computation.
 \begin{proof}
 Note that
\begin{align}
\pi^{*}\left(e^{1}\right)&= 2\pi^{*}\left(\ud x^{1}\right)=\ud y,\\
\pi^{*}\left(e^{2}\right)&= 2\pi^{*}\left(\ud x^{2}\right)=-\ud x,
\end{align}
then 
\begin{align}
 R\left(-\theta\right)R\left(\frac{\pi}{2}\right)\pi^{*}\begin{pmatrix}
e^{1}\\
e^{2}
\end{pmatrix}&=\begin{pmatrix}
\cos\theta&\sin\theta\\
-\sin\theta&\cos\theta
\end{pmatrix}\begin{pmatrix}
0&-1\\
1&0
\end{pmatrix}\begin{pmatrix}
\ud y\\
-\ud x
\end{pmatrix}\\
&=\begin{pmatrix}
\cos\theta&\sin\theta\\
-\sin\theta&\cos\theta
\end{pmatrix}\begin{pmatrix}
\ud x\\
\ud y
\end{pmatrix}\\
&=\begin{pmatrix}
\cos\theta \ud x +\sin\theta \ud y\\
\cos\theta \ud y-\sin\theta \ud x
\end{pmatrix}\\
&=\begin{pmatrix}
\sigma^{1}\\
\sigma^{2}
\end{pmatrix}.
\end{align}
 Multiplying on the left by the inverse matrix $R\left(-\frac{\pi}{2}\right)R\left(\theta\right)$ gives the desired result.
\end{proof}

Note that $s^{*}\sigma^{4}=2\left(x^{1} \ud x^{2}-x^{2}\ud x^{1}\right) $ and hence $\ud\left(s^{*}\sigma^{4}\right)=e^{1}\wedge e^{2}$ is the area form on $M_{0}$.

\subsection{Jackiw--Pi and Laplace vortices}
We take the same definition of a vortex as in \cite{Ross2021}.
\begin{dfn}
    A vortex on a Riemann surface $(M_{0},g_0)$ is a pair $(a,\phi)$ consisting of a connection $a$ on a $U(1)$-bundle over $M_{0}$ and a smooth section $\phi$ of the associated line bundle (via the standard representation of $U(1)$ on $\C$)  which satisfy the vortex equations
\begin{equation}
    \bar{\partial}_a \phi = 0, \qquad F=\ud a =\left(\lambda_{0}-\lambda\vert \phi\vert^{2}\right) \ud\mathrm{vol}_{g_0}. \label{eq: Vortex equations}
\end{equation}
\end{dfn}

The first Chern number of the $U(1)$-bundle gives the winding number, or topological charge of the vortex. In \cite{Manton:2016waw} it was shown that these equations are integrable when $\lambda_{0}=-K_{0},\lambda=-K$ are the constant Gauss curvature of two metrics on $M_0$. In the integrable case, the vortex equations reduce to the Liouville equation and are solved by a holomorphic map $f:M_{0}\to M$ between Riemann surfaces with Gauss curvatures $K_{0},K$. \\

Here we are interested in the case where $M_{0}$ is flat so $\lambda_{0}=0$. There are then two possibilities for $\lambda$: $\lambda=0$ Laplace vortices, and $\lambda=-1$ Jackiw--Pi (JP) vortices. From a physics point of view, equation \eqref{eq: Vortex equations} arises from minimising an energy functional and, when $M_{0}$ is non-compact, it needs to be supplemented by the condition $\vert \phi \vert \to 1$ to ensure the energy is finite. This finite energy condition means that solutions of the Laplace vortex equations that satisfy the boundary condition are trivial up to gauge. However, since the torus is a flat compact Riemann surface, Laplace vortices on the torus can be non-trivial.\\

In local complex coordinates $z=x^{1}+\ui x^{2}$ the first vortex equation is,
\begin{equation}
    (\partial_{\bar{z}} - \ui a_{\bar{z}}) \phi =0,
\end{equation}
which says that the Higgs field $\phi$ is holomorphic with respect to the connection $a$. Note that for JP vortices the second vortex equation becomes
\begin{equation}
    F = \ud a =\frac{\ui}{2}\vert\phi\vert^{2}e\wedge \bar{e}.
\end{equation}

JP vortices on the complex plane $\C$ have been well studied due to their relationship to Chern-Simons theory in $2+1$ dimensions. For the details of this relationship, we refer the reader to \cite{Horvathy:2008hd}. A discussion of how to construct solutions to the JP vortex equations and what these solutions look like is given in \cite{Horvathy,HY,Horvathy:2008hd}. \\

In particular, it follows from \cite{Horvathy,HY} that, for a charge $2N$ Jackiw--Pi vortex, the map $f$ may, after a M\"obius transformation of the target, be normalised to the form
\begin{equation} \label{eq:normalised_Mobius}
f(z)=\frac{P(z)}{Q(z)},
\end{equation}
where $P$ and $Q$ are polynomials in $z$ such that $\deg P<\deg Q=N$. This gives a convenient way of constructing examples of JP vortices and thus vortex configurations on Nappi--Witten by lifting vortices from $M_{0}$.\\

\subsection{Vortex configurations}

Since $\NW$ is a trivial bundle over $M_0=\C$ with projection 
$\pi:\NW\to M_0$, vortex solutions on $\C$ can be lifted to 
geometric objects on $\NW$. 
We refer to these lifted objects as \emph{vortex configurations}.

The vortex equations on $\NW$ are the natural analogue of the
Jackiw--Pi vortex equations on $\C$, written in terms of the
left-invariant coframe $\sigma^1,\sigma^2$.

\begin{dfn}
A pair $(A,\Phi)$ consisting of a one-form $A$ on $\NW$
and a complex function $\Phi:\NW\to\C$
is called a vortex configuration if it satisfies
\begin{equation}
(\ud\Phi+\ui A\Phi)\wedge\sigma=0,
\qquad
F_A=-\frac{\ui}{2}|\Phi|^2\,\sigma\wedge\bar{\sigma},
\label{eq: 3D vortex equations}
\end{equation}
where $F_A=\ud A$ and $\sigma=\sigma^1+i\sigma^2$.
\end{dfn}

These equations are invariant under the abelian gauge transformations
\begin{equation}
(A,\Phi)\mapsto(A+\ud\alpha,e^{-i\alpha}\Phi),
\qquad \alpha\in C^\infty(\NW).
\end{equation}

In the three-dimensional case of \cite{Ross2021} and related work, these vortex configurations were related to a flat non-abelian connection on the group manifold, coming from a pullback of the Maurer-Cartan one-form. Much of that story carries over here. However, as it is not necessary for the construction of harmonic spinors we do not give the details here.

Vortex configurations on $\NW$ and vortices on $M_{0}$ are directly related through the following lemma.

\begin{lem}
\label{lem:lift_equivalence}
Let $\pi:\NW\to M_0$ and $s:M_0\to\NW$ be the projection and section defined in
Section~\ref{sec: Geometric conventions}.  
Then a pair $(a,\phi)$ on $M_0$ satisfies the Jackiw--Pi vortex equations
\eqref{eq: Vortex equations} (with $\lambda_0=0$, $\lambda=-1$) if and only if
the pair $(A,\Phi)=(-\pi^*a,\pi^*\phi)$ on $\NW$ satisfies the vortex configuration
equations \eqref{eq: 3D vortex equations}, up to an abelian gauge transformation.
Conversely, if $(A,\Phi)$ satisfies \eqref{eq: 3D vortex equations} and is basic
(i.e.\ pulled back from $M_0$), then $(-s^*A,s^*\Phi)$ satisfies \eqref{eq: Vortex equations}.
\end{lem}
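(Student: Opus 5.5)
The plan is to reduce both vortex configuration equations \eqref{eq: 3D vortex equations} to the vortex equations on $M_0$, using the pullback formula \eqref{eq:pullbackcoframe}. Write $e=e^1+\ui e^2$. Then \eqref{eq:pullbackcoframe} says that $\pi^*e$ is obtained from $\sigma=\sigma^1+\ui\sigma^2$ by a rotation through the angle $\theta-\frac{\pi}{2}$. In complex notation I expect
\begin{equation*}
\pi^* e = -\ui\, e^{-\ui\theta}\,\bar{\sigma}\quad\text{or}\quad \pi^*e=-\ui e^{\ui\theta}\sigma .
\end{equation*}
The choice between these depends on the orientation conventions, since $\pi$ sends $(x,y)\mapsto(y/2,-x/2)$. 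A short check with $\sigma=e^{-\ui\theta}(\ud x+\ui \ud y)$ and $\pi^*e=\ud y-\ui\,\ud x=-\ui(\ud x+\ui\ud y)$ gives $\pi^*e=-\ui e^{\ui\theta}\sigma$. Consequently $\pi^*e\wedge\pi^*\bar e=\sigma\wedge\bar\sigma$, and for any one-form $\beta$ on $M_0$ the conditions $\pi^*\beta\wedge\pi^*e=0$ and $\pi^*\beta\wedge\sigma=0$ are equivalent. Since $\pi$ is a submersion, $\pi^*$ is injective on forms, so each equation on $M_0$ holds if and only if its pullback holds on $\NW$.

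\textbf{First (holomorphicity) equation.} On $M_0$, $\bar\partial_a\phi=0$ is equivalent to $(\ud\phi-\ui a\phi)\wedge e=0$: the $(0,1)$ part of $D_a\phi$ vanishes exactly when $D_a\phi$ is proportional to $e$. Pulling back with $A=-\pi^*a$ and $\Phi=\pi^*\phi$ gives $(\ud\Phi+\ui A\Phi)\wedge\pi^*e=0$. By the proportionality of $\pi^*e$ and $\sigma$, this is the first equation in \eqref{eq: 3D vortex equations}. I need to confirm that the paper's sign convention for the covariant derivative, $\partial_{\bar z}-\ui a_{\bar z}$, matches $\ud\phi-\ui a\phi$. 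That is exactly why the sign $A=-\pi^*a$ appears in the statement.

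\textbf{Second (curvature) equation.} We have $F_A=-\pi^*\ud a=-\frac{\ui}{2}\pi^*(|\phi|^2)\,\pi^*(e\wedge\bar e)=-\frac{\ui}{2}|\Phi|^2\sigma\wedge\bar\sigma$, which is the second equation. The reverse direction follows from injectivity of $\pi^*$.

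\textbf{Converse.} Suppose $(A,\Phi)$ is basic, $A=\pi^*\alpha$ and $\Phi=\pi^*\varphi$. Since $\pi\circ s=\mathrm{id}$, we get $s^*A=\alpha$ and $s^*\Phi=\varphi$. The argument above, run backwards, shows that $(-\alpha,\varphi)$ solves \eqref{eq: Vortex equations}. Hence $(-s^*A,s^*\Phi)$ does.

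\textbf{Gauge freedom.} The phrase ``up to gauge'' accounts for configurations that are only gauge-equivalent to basic ones, for example ones carrying a $\ud\theta$ or $\ud t$ component of $A$ together with a compensating phase $e^{-\ui\alpha}$ in $\Phi$. Gauge invariance of both systems handles these.

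\textbf{Main obstacle.} The delicate step is the sign and orientation bookkeeping. The rotation $R(-\pi/2)R(\theta)$ might relate $\pi^*e$ to $\bar\sigma$ rather than $\sigma$, which would flip both the holomorphicity condition and the sign of $\sigma\wedge\bar\sigma$. I would first verify $\pi^*e\propto\sigma$ explicitly, as in the check above. If instead the orientation turned out reversed, I would reconcile it through the sign change $A=-\pi^*a$ together with complex conjugation conventions.
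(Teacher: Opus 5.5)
Your proof is correct and is essentially the paper's argument. Both write $\pi^*e=\chi\,\sigma$ with a unit-modulus phase (your explicit $\chi=-\ui e^{\ui\theta}$ is right), deduce $\pi^*(e\wedge\bar e)=\sigma\wedge\bar\sigma$, use $\ud\Phi+\ui A\Phi=\pi^*(\ud\phi-\ui a\phi)$ and $F_A=-\pi^*F_a$, and get the converse from $\pi\circ s=\mathrm{id}$. Your explicit check already settles the orientation, so the hedging in your last paragraph is unnecessary; no gauge transformation is needed to absorb $\chi$ either, since a nowhere-vanishing scalar factor does not affect whether $\beta\wedge\sigma=0$.
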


\begin{proof}
The pullback of the base coframe is related to the left-invariant forms on $\NW$
by \eqref{eq:pullbackcoframe}.
Equivalently, in complex notation there exists a smooth phase $\chi(\theta)$ with
$|\chi|=1$ such that
\begin{equation}
\label{eq:complex_coframe_relation}
\pi^*e=\chi\,\sigma,
\qquad\text{where}\quad e=e^1+i e^2,\ \ \sigma=\sigma^1+i\sigma^2.
\end{equation}
In particular,
\begin{equation}
\label{eq:area_relation}
\pi^*(e\wedge\bar e)=\sigma\wedge\bar\sigma .
\end{equation}

Now set $(A,\Phi)=(-\pi^*a,\pi^*\phi)$. Then
\begin{equation}
\ud\Phi+\ui A\Phi
=\pi^*(\ud\phi-\ui a\,\phi).
\end{equation}
Using \eqref{eq:complex_coframe_relation},
\begin{equation}
(\ud\Phi+\ui A\Phi)\wedge\sigma
=\chi^{-1}\,\pi^*\bigl((\ud\phi-\ui a\phi)\wedge e\bigr),
\end{equation}
so the holomorphicity equation on $M_0$ is equivalent to the first vortex
configuration equation on $\NW$.

For the curvature equation, since $F_A=\ud A=-\pi^*(\ud a)=-\pi^*F_a$,
the Jackiw--Pi curvature equation
\begin{equation}
F_a=\frac{\ui}{2}|\phi|^2\,e\wedge\bar e
\end{equation}
pulls back to
\begin{equation}
F_A=-\frac{\ui}{2}|\Phi|^2\,\sigma\wedge\bar\sigma,
\end{equation}
where we used \eqref{eq:area_relation} and $|\Phi|=|\pi^*\phi|$.

Finally, the phase $\chi(\theta)$ in \eqref{eq:complex_coframe_relation}
can be absorbed by an abelian gauge transformation
$\Phi\mapsto e^{-i\alpha}\Phi$, $A\mapsto A+\ud\alpha$ (locally, and globally
on the trivial bundle), so the correspondence holds up to gauge.

The converse direction follows by pulling back along the section $s$, using
$s^*\pi^*=\mathrm{id}$, $s^*\sigma= \ui e$, and
the same identities.
\end{proof}

\begin{ex}
Consider the JP vortex given by $f(z)=z^{2}$. This representative is Möbius-equivalent, via the target inversion $w\mapsto 1/w$, to a normalized rational map of the form \eqref{eq:normalised_Mobius}, namely
$\tilde f(z)=1/z^2$.
The modulus of the Higgs field, following the general prescription in \cite{Manton:2016waw}, is
    \begin{equation}
        \vert\phi\vert^{2}=\frac{1}{\left(1+\vert f(z)\vert^{2}\right)^{2}}\left\vert\frac{\ud f}{\ud z}\right\vert^{2}=\frac{4\vert z\vert^{2}}{\left(1+\vert z\vert^{4}\right)^{2}},
    \end{equation}
    so we can take a gauge in which,
    \begin{equation}
        \phi=\frac{2z}{1+\vert z \vert^{4}},
    \end{equation}
    and away from the zeros of the Higgs field, $a_{\bar{z}}=-i\partial_{\bar{z}}\ln\phi$ so
    \begin{equation}
        a=\frac{2\ui \vert z\vert^{2}}{1+\vert z\vert^{4}}\left(z\ud\bar{z}-\bar{z}\ud z\right).
    \end{equation}
    The vortex configuration associated with this Jackiw--Pi vortex is thus
    \begin{equation}
       \Phi = \pi^{*}\phi =\frac{16\left(y-\ui x\right)}{16+(y^{2}+x^{2})^{2}}, \qquad A=-\pi^{*}a=4\frac{x^{2}+y^{2}}{16+\left(x^{2}+y^{2}\right)^{2}}\left[y\ud x-x\ud y\right].
       \label{eq: JP on NW example}
    \end{equation}
This example illustrates how vortex configurations arising from Jackiw--Pi vortices can be represented, up to gauge, on an $\R^2$-slice of $\NW$.
\end{ex}
\section{Harmonic spinors from vortices}
\label{sec: vortex harmonic spinors on NW}

We now use vortex configurations on $\NW$ to construct harmonic spinors for a twisted Dirac operator. Before stating the main result, we recall the chiral decomposition of the Dirac operator and introduce the notion of a vortex harmonic spinor.

The spinor bundle splits into left-handed and right-handed chiral parts. Thus, a Dirac spinor can be written as
\begin{equation}
    \Psi=
    \begin{pmatrix}
        \Psi_L\\
        \Psi_R
    \end{pmatrix},
\end{equation}
where $\Psi_L$ and $\Psi_R$ are two-component Weyl spinors.

Recall from equation~\eqref{eq:NW_dirac_operator} that, in the Weyl representation, the Dirac operator on $\NW$ is off-diagonal:
\begin{equation}
    D
    \begin{pmatrix}
        \Psi_L\\
        \Psi_R
    \end{pmatrix}
    =
    \begin{pmatrix}
        0 & D^-\\
        D^+ & 0
    \end{pmatrix}
    \begin{pmatrix}
        \Psi_L\\
        \Psi_R
    \end{pmatrix}
    =
    \begin{pmatrix}
        D^-\Psi_R\\
        D^+\Psi_L
    \end{pmatrix}.
\end{equation}
Therefore, the equation $D\Psi=0$ decouples into the two chiral equations
\begin{equation}
D^-\Psi_R=0,
\qquad
D^+\Psi_L=0.
\end{equation}
In what follows we focus on the right-handed equation.

\begin{dfn}
A \emph{vortex harmonic spinor} on $\NW$ is a pair $(\Psi_R,A)$ consisting of a right-handed Weyl spinor $\Psi_R:\NW\to\C^2$ and a one-form $A$ on $\NW$ such that
\begin{equation}
    D^-_{\NW,A}\Psi_R=0,
    \qquad
    F_A=-\frac{\ui}{2}\vert\Psi_R\vert^{2}\sigma\wedge\bar{\sigma}.
\end{equation}
\end{dfn}

We first record the equations implied by the vortex configuration equations.

\begin{lem}
\label{lem:contracted_vortex_eqs}
Let $(A,\Phi)$ be a vortex configuration on $\NW$, so that
\begin{equation}
    (\ud\Phi+\ui A\Phi)\wedge \sigma=0,
    \qquad
    F_A=-\frac{\ui}{2}|\Phi|^2\,\sigma\wedge\bar{\sigma}.
\end{equation}
Then
\begin{equation}
    U_0\Phi+\ui A_0\Phi=0,
    \qquad
    U_3\Phi+\ui A_3\Phi=0,
    \qquad
    U_+\Phi+\ui A_+\Phi=0,
    \label{eq:contracted_vortex_system}
\end{equation}
where $U_\pm=U_1\pm \ui U_2$ and $A_\mu=A(U_\mu)$.
\end{lem}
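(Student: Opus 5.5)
The plan is to expand the one-form $\beta:=\ud\Phi+\ui A\Phi$ in the orthonormal coframe $\{U^\mu\}$ and read off the components of the two-form $\beta\wedge\sigma$ in the basis $\{U^\mu\wedge U^\nu\}_{\mu<\nu}$. Only the first vortex configuration equation in \eqref{eq: 3D vortex equations} is used; the curvature equation plays no role.

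First, since $U^\mu(U_\nu)=\delta^\mu_\nu$, any one-form can be written as $\beta=\sum_\mu \beta(U_\mu)\,U^\mu$. Here
\begin{equation*}
\beta(U_\mu)=U_\mu\Phi+\ui A_\mu\Phi=:\beta_\mu .
\end{equation*}
Second, from the explicit dual coframe we have $U^1=\sigma^1$ and $U^2=\sigma^2$, so $\sigma=U^1+\ui U^2$.

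Next I will compute
\begin{equation*}
\beta\wedge\sigma=\beta_0\,U^0\wedge U^1+\ui\beta_0\,U^0\wedge U^2+(\ui\beta_1-\beta_2)\,U^1\wedge U^2-\beta_3\,U^1\wedge U^3-\ui\beta_3\,U^2\wedge U^3 .
\end{equation*}
The coefficient of $U^1\wedge U^2$ is
\begin{equation*}
\ui\beta_1-\beta_2=\ui(\beta_1+\ui\beta_2)=\ui\bigl(U_+\Phi+\ui A_+\Phi\bigr),
\end{equation*}
where $A_+=A_1+\ui A_2$ by linearity of $A$.

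Since $\{U^\mu\}$ is a global coframe, the two-forms $U^\mu\wedge U^\nu$ with $\mu<\nu$ are pointwise linearly independent. The vanishing of $\beta\wedge\sigma$ therefore forces every coefficient to vanish. This gives $\beta_0=0$, $\beta_3=0$ and $\beta_1+\ui\beta_2=0$, which are exactly the three equations in \eqref{eq:contracted_vortex_system}.

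There is no real obstacle. The only point needing care is bookkeeping: the sign and ordering in the wedge products, and confirming that $\sigma$ involves only $U^1$ and $U^2$ and not $U^0$ or $U^3$. The latter follows from the formulas for $U^0$ and $U^3$, which involve only $\sigma^3$ and $\sigma^4$.
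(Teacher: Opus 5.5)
Your proof is correct and is essentially the paper's argument. The paper evaluates $\alpha\wedge\sigma$ on the pairs $(X,U_-)$ for $X=U_0,U_3,U_+$, using $\sigma(U_0)=\sigma(U_3)=\sigma(U_+)=0$ and $\sigma(U_-)=2$; this is the dual form of your reading off coefficients in the basis $U^\mu\wedge U^\nu$, and your full expansion has the small bonus of showing that the three equations are also sufficient for $\beta\wedge\sigma=0$.
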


\begin{proof}
Set
\begin{equation}
\alpha:=\ud\Phi+\ui A\Phi.
\end{equation}
The first vortex configuration equation is $\alpha\wedge\sigma=0$. Contracting this $2$-form with the pair $(X,U_-)$ gives
\begin{equation}
(\alpha\wedge\sigma)(X,U_-)=\alpha(X)\sigma(U_-)-\alpha(U_-)\sigma(X).
\end{equation}
Now $\sigma=\sigma^1+\ui\sigma^2$, and since $\sigma(U_0)=\sigma(U_3)=\sigma(U_+)=0$ while $\sigma(U_-)\neq 0$, taking $X=U_0,U_3,U_+$ yields
\begin{equation}
\alpha(X)\sigma(U_-)=0.
\end{equation}
Hence $\alpha(U_0)=\alpha(U_3)=\alpha(U_+)=0$, which is precisely
\begin{equation}
U_0\Phi+\ui A_0\Phi=0,\qquad
U_3\Phi+\ui A_3\Phi=0,\qquad
U_+\Phi+\ui A_+\Phi=0.
\end{equation}
The curvature equation is just the second vortex configuration equation.
\end{proof}

We can now prove the main result of this section.

\begin{thm}
\label{thm: vort to spinor NW}
Let $(A,\Phi)$ be a vortex configuration on $\NW$. Then, for any constant $c \in \R$,
\begin{equation}
    \Psi_R=
    \begin{pmatrix}
        \Phi\\
        0
    \end{pmatrix},
    \qquad
    A_c=A+ c \sigma^3,
\end{equation}
defines a vortex harmonic spinor on $\NW$.
\end{thm}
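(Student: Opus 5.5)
The plan is to compute $D^-_{\NW,A_c}$ directly on the ansatz $\Psi_R=(\Phi,0)^T$ and check each of the two defining conditions of a vortex harmonic spinor separately. First I will read off the chiral block $D^-$ from the Weyl-representation formulas for the Dirac operator. In \eqref{eq:NW_dirac_operator} the right-handed components are $f_3,f_4$. The derivative part of $D^-$ is the top-right block of $\sum_\mu\gamma^\mu U_\mu$:
\[
\begin{pmatrix} U_0+U_3 & U_1-\ui U_2\\ U_1+\ui U_2 & U_0-U_3\end{pmatrix}.
\]
The zeroth-order (spin-connection) part is the restriction of the matrix $D\Psi_i$ to rows $1,2$ and columns $3,4$. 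The only nonzero entry there is $\tfrac{3\ui}{2\sqrt{b+2}}$ in row $2$, column $4$, so this term multiplies $f_4$ alone. Twisting replaces $U_\mu$ by $U_\mu+\ui (A_c)_\mu$ and leaves the zeroth-order term unchanged.

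Setting $f_3=\Phi$ and $f_4=0$ therefore kills the spin-connection contribution, and I expect
\[
D^-_{\NW,A_c}\Psi_R=\begin{pmatrix}(U_0+U_3)\Phi+\ui\bigl((A_c)_0+(A_c)_3\bigr)\Phi\\ U_+\Phi+\ui (A_c)_+\Phi\end{pmatrix}.
\]
Next I will show that the shift $c\sigma^3$ does not change the relevant components. From the table of pairings, $\sigma^3(U_0)+\sigma^3(U_3)=-\tfrac{1}{\sqrt{b+2}}+\tfrac{1}{\sqrt{b+2}}=0$ and $\sigma^3(U_1)=\sigma^3(U_2)=0$, so $\sigma^3(U_+)=0$. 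Hence $(A_c)_0+(A_c)_3=A_0+A_3$ and $(A_c)_+=A_+$. Lemma~\ref{lem:contracted_vortex_eqs} gives $(U_0+\ui A_0)\Phi=0$, $(U_3+\ui A_3)\Phi=0$ and $(U_++\ui A_+)\Phi=0$. Adding the first two shows that both components vanish, so $D^-_{\NW,A_c}\Psi_R=0$.

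For the curvature condition, $\sigma^3=\ud\theta$ is closed, so $F_{A_c}=\ud A+c\,\ud\ud\theta=F_A$. Since $|\Psi_R|^2=|\Phi|^2$, the second vortex configuration equation gives exactly $F_{A_c}=-\tfrac{\ui}{2}|\Psi_R|^2\sigma\wedge\bar\sigma$. Because $c$ is real, $A_c$ is still a real one-form, so this is a legitimate $U(1)$ twisting.

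The step I expect to need the most care is the first one: identifying correctly which entries of the $4\times4$ operators make up $D^-$. In particular I must confirm that the constant spin-connection term acts only on $f_4$, not on $f_3$. If it hit $f_3$, the ansatz would need an extra phase or a specific value of $c$ to cancel it. I would also check that the sign convention $U_\mu+\ui A_\mu$ in the twisted operator matches the $\ud\Phi+\ui A\Phi$ convention used in Lemma~\ref{lem:contracted_vortex_eqs}.
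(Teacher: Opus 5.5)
Your proposal is correct and follows the paper's proof essentially step for step. You read off $D^-_{\NW,A_c}$ from the Weyl-representation operator, with the constant $\tfrac{3\ui}{2\sqrt{b+2}}$ sitting only in the lower-right entry, so it never acts on $f_3=\Phi$. You use $\sigma^3(U_+)=0$ and $\sigma^3(U_0)+\sigma^3(U_3)=0$ to discard the shift $c\sigma^3$, invoke Lemma~\ref{lem:contracted_vortex_eqs}, and get $F_{A_c}=F_A$ from closedness of $\ud\theta$. Both checks you flag go through as you expect: the spin-connection term acts only on $f_4$, and $U_\mu+\ui A_\mu$ applied to $\Phi$ is exactly $\alpha(U_\mu)$ for $\alpha=\ud\Phi+\ui A\Phi$.
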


\begin{proof}
Since $\sigma^3=\ud\theta$ is closed, and $c$ is a constant, 
\begin{equation}
F_{A_c}=\ud A_c=\ud A+c\,\ud\sigma^3=\ud A=F_A.
\end{equation}
Moreover,  $|\Psi_R|^2=|\Phi|^2$ and since $(A, \Phi)$ is a vortex configuration, 
\begin{equation}
    F_{A_c}=-\frac{\ui}{2}\vert\Psi_R\vert^{2}\sigma\wedge\bar{\sigma}.
\end{equation}

It remains to show that $D^-_{\NW,A_c}\Psi_R=0$. From the explicit form of the Dirac operator in Section~\ref{sec: NW and Dirac}, the chiral operator acting on right-handed Weyl spinors is
\begin{equation}
    D^-_{\NW,A_c}=
    \begin{pmatrix}
        U_{0}+\ui (A_c)_{0}+U_{3}+\ui (A_c)_{3}
        &
        U_{-}+\ui (A_c)_{-}
        \\[4pt]
        U_{+}+\ui (A_c)_{+}
        &
        U_{0}+\ui (A_c)_{0}-U_{3}-\ui (A_c)_{3}
        +\dfrac{3\ui}{2\sqrt{b+2}}
    \end{pmatrix}.
\end{equation}
Therefore
\begin{equation}
    D^-_{\NW,A_c}\Psi_R=
    \begin{pmatrix}
        (U_0+\ui (A_c)_0+U_3+\ui (A_c)_3)\Phi\\[4pt]
        (U_++\ui (A_c)_+)\Phi
    \end{pmatrix}.
\end{equation}

We first consider the lower component. Since $\sigma^3(U_+)=0$, we have
\begin{equation}
(A_c)_+=A_+,
\end{equation}
so
\begin{equation}
(U_++\ui (A_c)_+)\Phi=(U_++\ui A_+)\Phi=0
\end{equation}
by Lemma~\ref{lem:contracted_vortex_eqs}.

For the upper component, we use
\begin{align}
    (A_c)_0&=A_0+c \sigma^3(U_0)
    =A_0-\frac{c}{\sqrt{b+2}},\\
    (A_c)_3&=A_3+c\sigma^3(U_3)
    =A_3+\frac{c}{\sqrt{b+2}}.
\end{align}
Hence
\begin{equation}
    \ui (A_c)_0+\ui (A_c)_3 = \ui A_0+\ui A_3.
\end{equation}
Therefore
\begin{equation}
    (U_0+\ui (A_c)_0+U_3+\ui (A_c)_3)\Phi
    =
    (U_0+\ui A_0)\Phi+(U_3+\ui A_3)\Phi
    =0,
\end{equation}
again by Lemma~\ref{lem:contracted_vortex_eqs}.

Thus, both components vanish, and so
\begin{equation}
D^-_{\NW,A_c}\Psi_R=0.
\end{equation}
We conclude that $(\Psi_R,A_c)$ is a vortex harmonic spinor on $\NW$.
\end{proof}
Observe that the particular value $c=3/4$ is a natural normalisation, since it absorbs the constant spin-connection term in the lower-right entry of $D^-_{\NW,A_c}$. However, the zero-mode ansatz above only involves the first column of the operator, and hence the construction works for any constant $c$.
\section{Harmonic spinors on Minkowski}
\label{sec: vortex harmonic spinors on Minkowski}
It is well-known that the space of harmonic spinors is invariant under conformal transformations of the metric \cite{hitchin1974harmonic}. In fact, if two metrics are related by a conformal transformation $g_{\Omega}=\Omega^{2}g$ then the Dirac operators are related by 
\begin{equation}
    D_{\Omega}=\Omega^{-\frac{(n+1)}{2}}D\Omega^{\frac{n-1}{2}},
\end{equation}
where $n$ is the dimension of the manifold. 
In particular, if $\psi$ is harmonic with respect to $D$, then
\begin{equation}
\psi_\Omega=\Omega^{-\frac{n-1}{2}}\psi
\end{equation}
is harmonic with respect to $D_\Omega$.
In \cite{Erdos:2001} this is discussed in detail for the conformal relationship between $S^{3}$ and $\R^{3}$, and the relationship described there underlies the relationship between the vortex magnetic modes of \cite{RS2,RS1}.

\subsection{Nappi--Witten is conformally flat}
The bi-invariant metrics on $\mathcal{N}$ induced by the metrics \eqref{eq: NW quadratic form} at the identity (taking $k=1$) are 

\begin{equation}\label{metric_NW}
ds^2= dx^2 + dy^2 + 2 d\theta d t+(y dx - x dy )d \theta + b d \theta^2.
\end{equation}
Under the change of variables
\begin{align}
        x' &= \cos (\theta/2) x + \sin (\theta/2) y \\
    y' &= - \sin (\theta/2) x + \cos (\theta/2) y \\
    t' &= t +\frac{b}{2} \theta, \label{eq:t_translate}
\end{align}
one sees that the metric is that of a plane pp-wave, in particular, it is a Cahen-Wallach space \cite{leistner2022conformal},
\begin{equation}
    ds^2 = dx'^2 + dy'^2 + 2 d\theta d t'   -  \frac{x'^2 + y'^2}{4} d \theta^2.
\end{equation}
For $\theta \neq \pi + 2k\pi$, the change of variables,
\begin{align}
    \vec{X} &= \frac{1}{\cos(\theta/2)} (x',y')\\
    U &= 2 \tan(\theta/2) \\
    V &= -t' +\frac{\left((x')^{2}+(y')^{2}\right)}{4}\tan\left(\theta/2\right),
\end{align}

gives
\begin{equation}
    d\vec{X}^2 - 2 dU dV = \frac{1}{\cos^2(\theta/2)}  ds^2.
    \label{eq:confromal relation between metrics}
\end{equation}
In other words, the bi-invariant metrics on $\mathcal{N}$ are conformal to Minkowski space $\mathcal{M}$, for which we use light-cone-type coordinates $U, V$ for a two-dimensional Lorentzian subspace, and   $\vec X=(X^1,X^2)$ for the transverse Euclidean two-plane.

More precisely, let $
\mathcal{N}^* := \mathcal{N} \setminus \{\theta = \pi + 2k\pi \mid k \in \mathbb{Z}\}$, then the inverse of the above change of variables is the map $H: \mathcal{M} \rightarrow \mathcal{N}^*$ is given by,
\begin{align}
\theta		&=2\arctan\frac{U}{2},\\
\begin{pmatrix}
    x\\
    y
\end{pmatrix}		&=\frac{2}{4+U^{2}}\begin{pmatrix}
2&-U\\
U&~2
\end{pmatrix}\vec{X},\\
t&=-V-b\arctan\frac{U}{2}+\frac{1}{2}\frac{U}{4+U^{2}}\vec{X}^{2}.
\end{align}

From \eqref{eq:confromal relation between metrics} we obtain the conformal relation
\begin{equation} \label{conformal_relation}
    ds^2_{\mathcal{M}} = \Omega^2\, H^* ds^2,
    \qquad 
    \Omega=\frac{1}{\cos\left(\theta/2\right)} = \frac{\sqrt{4+U^2}}{2}.
\end{equation}
It follows that the Dirac operators on Minkowski space and Nappi--Witten are related by
\begin{equation}
    D_{\mathcal{M}} = \Omega^{-\frac{5}{2}}\, H^* D_{\NW}\, \Omega^{\frac{3}{2}}.
\end{equation}

\subsection{Spinors from vortices}

From here on we are working with $b=0$.
Next, we investigate the relationship between the orthonormal coframe in Minkowski space
$
e \;=\; (e^1,e^2,e^3,e^0)^\top,
$
with
\begin{equation}
    e^1=dX^1,\quad e^2=dX^2,\quad
e^3=\frac{dU-dV}{\sqrt2},\quad
e^0=\frac{dU+dV}{\sqrt2}\,.
\end{equation}
and the pullback via $H$ of the orthonormal coframe in Nappi--Witten, 
\begin{equation}
\vartheta^1=\sigma^1,\qquad
\vartheta^2=\sigma^2,\qquad
\vartheta^3=\frac{1}{\sqrt2}(\sigma^3+\sigma^4),\qquad
\vartheta^0=\frac{1}{\sqrt2}(\sigma^3-\sigma^4).
\label{eq:NW orthonormal coframe}
\end{equation}

Since the metrics in Minkowski space satisfy \eqref{conformal_relation},
the rescaled coframe
\begin{equation}
\widehat{\vartheta}^\mu:=\Omega\,H^*\vartheta^\mu
\end{equation}
is orthonormal for the Minkowski metric.

\begin{lem}
\label{lem:Lambda}
The proper orthochronous Lorentz transformation 
$\Lambda(U,\vec X)\in SO^+(1,3)$ relating the coframes $e$ and $(\widehat{\vartheta}^1,\widehat{\vartheta}^2,\widehat{\vartheta}^3,\widehat{\vartheta}^0)^\top$ as
\begin{equation}
e^a=\Lambda^a_{\;\;\mu}(U,\vec X)\,\widehat{\vartheta}^\mu,
\label{eq:Lambda frame relation}
\end{equation}
is,
\begin{equation}
\Lambda(U,\vec X)= \frac{1}{\Omega}
\begin{pmatrix}
1 & -\dfrac{U}{2} &
\dfrac{UX^1+2X^2}{4\sqrt2} &
\dfrac{UX^1+2X^2}{4\sqrt2}
\\[10pt]
\dfrac{U}{2} & 1 &
-\dfrac{2X^1-UX^2}{4\sqrt2} &
-\dfrac{2X^1-UX^2}{4\sqrt2}
\\[10pt]
-\dfrac{X^2}{2\sqrt2} & \dfrac{X^1}{2\sqrt2} &
\dfrac{16+2U^2-r^2}{16} &
-\dfrac{r^2-2U^2}{16}
\\[10pt]
\dfrac{X^2}{2\sqrt2} & -\dfrac{X^1}{2\sqrt2} &
\dfrac{r^2+2U^2}{16} &
\dfrac{16+2U^2+r^2}{16}
\end{pmatrix},
\label{eq:Lambda matrix}
\end{equation}
with $r^2=(X^1)^2+(X^2)^2$.
\end{lem}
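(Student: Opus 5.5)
The approach is direct computation. First express both coframes in terms of the Minkowski coordinate differentials $dU, dV, dX^1, dX^2$. Then read off $\Lambda$ as the matrix expressing $e$ in terms of $\widehat\vartheta$. Finally, check that $\Lambda\in SO^+(1,3)$.

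\textbf{Step 1: pull back the Nappi--Witten coframe.} We set $b=0$ and pull back $\sigma^1,\dots,\sigma^4$ via $H$. From $\theta=2\arctan(U/2)$ we get
\[
d\theta=\frac{4\,dU}{4+U^2},\qquad \cos(\theta/2)=\frac{2}{\sqrt{4+U^2}},\qquad \sin(\theta/2)=\frac{U}{\sqrt{4+U^2}}.
\]
Since $(x,y)^\top=\cos(\theta/2)\,R(\theta/2)\vec X$, where $R$ is the rotation appearing in $H$, the one-forms $(\sigma^1,\sigma^2)^\top=R(-\theta)(dx,dy)^\top$ become
\[
\cos(\theta/2)\,R(-\theta/2)\bigl(d\vec X+(\text{terms in }dU)\bigr).
\]
Multiplying by $\Omega=1/\cos(\theta/2)$, the rescaled forms $\widehat\vartheta^{1},\widehat\vartheta^{2}$ are $R(-\theta/2)$ applied to $d\vec X$ plus a term proportional to $dU$ with coefficients linear in $X^i$. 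For $\sigma^4=dt-\tfrac12(x\,dy-y\,dx)$ we use $t=-V+\tfrac12\frac{U}{4+U^2}r^2$. The quantity $x\,dy-y\,dx$ is the rotation-invariant area term, equal to $\cos^2(\theta/2)(X^1dX^2-X^2dX^1)$ plus $\tfrac12 d\theta\,|\cdot|^2$. This gives $\widehat\vartheta^{3},\widehat\vartheta^{0}$ as combinations of $dU$, $dV$, $X^idX^j$ and $r^2dU$.

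\textbf{Step 2: invert and read off $\Lambda$.} The $4\times4$ matrix $B$ with $\widehat\vartheta=B\,(dX^1,dX^2,dU,dV)^\top$ is block-triangular up to the $dU$ column, so it is easy to invert. Composing $B^{-1}$ with the constant matrix taking $(dX^1,dX^2,dU,dV)$ to $e$ gives $\Lambda=C B^{-1}$. The upper-left $2\times2$ block should be $R(\theta/2)$ rescaled, which is exactly $\Omega^{-1}\begin{pmatrix}1&-U/2\\U/2&1\end{pmatrix}$. The $1/\Omega$ prefactor then appears naturally.

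A shortcut avoids the full inversion. Once $B$ is known, one can use $\Lambda^{-1}=\eta\Lambda^\top\eta$ with $\eta=\mathrm{diag}(1,1,1,-1)$ in the ordering $(1,2,3,0)$. This means $\Lambda$ can be obtained by transposing the coefficient matrix of $\widehat\vartheta$ in terms of $e$ and conjugating by $\eta$, which uses only Step 1.

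\textbf{Step 3: verify membership in $SO^+(1,3)$.} First check $\Lambda^\top\eta\Lambda=\eta$. The diagonal blocks give identities such as
\[
(16+2U^2+r^2)^2-(r^2+2U^2)^2-\dots=16^2\,\Omega^2\cdot\ldots,
\]
which reduce via $\Omega^2=(4+U^2)/4$. Orthochronicity follows because $\Lambda^0{}_0=\Omega^{-1}(16+2U^2+r^2)/16>0$. For $\det\Lambda=1$, the cleanest argument is continuity: $\det\Lambda$ is $\pm1$, and at $U=0$, $\vec X=0$ we have $\Lambda=I$, so $\det\Lambda=1$ on the connected domain.

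The main obstacle is Step 1 for $\widehat\vartheta^3,\widehat\vartheta^0$. The $dU$ coefficients there involve $r^2$ and $U$ in rational combinations, and these must collapse to the quadratic entries $(r^2\pm2U^2)/16$. I would organise this by writing everything in terms of $\Omega$ and $\vec X$, and checking the result against the normalisation identities of Step 3 as a consistency test.
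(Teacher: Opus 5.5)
Your proposal is correct and follows essentially the same route as the paper. Both pull back $\sigma^1,\dots,\sigma^4$ along $H$ in the coordinates $(\vec X,U,V)$ and read off the frame change; your expressions for $d\theta$, for $(x,y)=\cos(\theta/2)R(\theta/2)\vec X$ and for $x\,dy-y\,dx$ reproduce the paper's $H^*\sigma^i$. The paper inverts explicitly, writing $e=-\Omega^{-1}G\,H^*\sigma$ and then $\Lambda=-\Omega^{-2}GM$. Your shortcut $\Lambda=\eta K^{\top}\eta$, with $K$ the matrix of $\widehat\vartheta$ in terms of $e$, avoids that inversion. It is legitimate because both coframes are orthonormal for $ds^2_{\mathcal M}$ by the conformal relation, and it does give the stated matrix. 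Your Step 3 also supplies a check the paper leaves implicit: orthochronicity from $\Lambda^0{}_0>0$, and $\det\Lambda=1$ by continuity from $\Lambda(0,\vec 0)=I$, which together give $\Lambda\in SO^+(1,3)$.
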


\begin{proof}
The pullback by $H$ of the left-invariant one-forms defined in Section~\ref{ssec:Group_conventions} is,
\begin{align}
H^*\!\begin{pmatrix}\sigma^1\\ \sigma^2\end{pmatrix}
&= \frac{2}{4+U^2}\!\left( \mathcal{C}(U) \,d\vec X + J\,\vec X\,dU\right),\\
H^*\sigma^3 &= \frac{4}{4+U^2}\,dU,\\
H^*\sigma^4 &= -\,dV - \frac{\lvert\vec X\rvert^2}{2(4+U^2)}\,dU
+ \frac{U}{4+U^2}\,\vec X\!\cdot\! d\vec X
- \frac{2}{4+U^2}\,\bigl(X^1\,dX^2 - X^2\,dX^1\bigr),
\end{align}
where
\begin{equation}
\mathcal{C}(U)=\begin{pmatrix}2&U\\ -U&2\end{pmatrix},\qquad
J=\begin{pmatrix}0&-1\\ 1&0\end{pmatrix}.
\end{equation}

From this computation, we obtain that the flat Minkowski coframe $e$ is related to the pullback of the Nappi--Witten coframe $(\sigma^i)$ by
\begin{equation}
e^a = -\frac{1}{\Omega}\, G^a_{\;\;i}(U,\vec{X})\, H^*\sigma^i,
\label{eq: frame relations}
\end{equation}
where $G(U,\vec{X})$ is the transformation
\begin{equation}
G(U,\vec X)=
\begin{pmatrix}
-\Omega & \dfrac{U\Omega}{2} & -\dfrac{\kappa(UX^1 + 2X^2)}{D} & 0\\[12pt]
-\dfrac{U\Omega}{2} & -\Omega & \dfrac{\kappa(2X^1 - UX^2)}{D} & 0\\[12pt]
\dfrac{X^2\Omega}{2\sqrt{2}} & -\dfrac{X^1\Omega}{2\sqrt{2}} & \dfrac{\kappa}{\sqrt{2}}\!\left(\dfrac{r^2}{2D}-1\right) & -\dfrac{\Omega}{\sqrt{2}}\\[12pt]
-\dfrac{X^2\Omega}{2\sqrt{2}} & \dfrac{X^1\Omega}{2\sqrt{2}} & -\dfrac{\kappa}{\sqrt{2}}\!\left(\dfrac{r^2}{2D}+1\right) & \dfrac{\Omega}{\sqrt{2}}
\end{pmatrix},
\end{equation}
with
\begin{equation}
D = 4+U^2,\qquad
\Omega(U) = \frac{\sqrt{D}}{2},\qquad
\kappa(U) = \frac{D^{3/2}}{8},\qquad
r^2 =(X^1)^2+(X^2)^2.
\end{equation}

Using \eqref{eq:NW orthonormal coframe}, the left-invariant one-forms are expressed in terms of the orthonormal coframe as
\begin{equation}
\begin{pmatrix}
\sigma^1\\
\sigma^2\\
\sigma^3\\
\sigma^4
\end{pmatrix}
=
M
\begin{pmatrix}
\vartheta^1\\
\vartheta^2\\
\vartheta^3\\
\vartheta^0
\end{pmatrix},
\qquad
M=
\begin{pmatrix}
1&0&0&0\\
0&1&0&0\\
0&0&\frac1{\sqrt2}&\frac1{\sqrt2}\\
0&0&\frac1{\sqrt2}&-\frac1{\sqrt2}
\end{pmatrix}.
\end{equation}

From \eqref{eq: frame relations} 
and the rescaling $\widehat{\vartheta}^\mu=\Omega H^*\vartheta^\mu$, we obtain
\begin{equation}
e^a
=
-\frac{1}{\Omega}\,G^a_{\;\;i}\,M^i_{\;\;\mu}\,H^*\vartheta^\mu =
-\frac{1}{\Omega^2}\,G^a_{\;\;i}\,M^i_{\;\;\mu}\,\widehat{\vartheta}^\mu.
\end{equation}
Thus
\begin{equation}
\Lambda=-\frac{1}{\Omega^2}GM.
\end{equation}

Substituting the explicit expression for $G$ and simplifying using
$\frac{\kappa}{D}=\frac{\Omega}{4}$,
yields the matrix \eqref{eq:Lambda matrix}.

\end{proof}

\medskip

\begin{cor}
\label{cor: NW to Mink}
Let $\Psi:\NW\to\C^4$ be a harmonic spinor for the Dirac operator $D_{\NW,A}$ on $\NW$ coupled to a $U(1)$ connection $A$. Let $S(\Lambda)$ denote a spin lift of the Lorentz transformation $\Lambda$, satisfying
\begin{equation}
S(\Lambda)^{-1}\gamma^a S(\Lambda)=\Lambda^a_{\;\;b}\gamma^b.
\end{equation}
Then
\begin{equation}
\Psi_{\mathcal M}=S(\Lambda)\,\Omega^{-\frac32}\,H^*\Psi
\end{equation}
is a harmonic spinor for the Dirac operator $D_{\mathcal{M}}$ on Minkowski space coupled to the pulled-back connection $H^*A$.
\end{cor}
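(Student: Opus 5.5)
The plan is to combine three ingredients: conformal covariance of the twisted Dirac operator, the frame change of Lemma~\ref{lem:Lambda}, and gauge invariance of the minimal coupling. Throughout, $H$ is a diffeomorphism of $\mathcal{M}$ onto $\mathcal{N}^*$, so we may identify $\mathcal{M}$ with $\mathcal{N}^*$. On $\mathcal{N}^*$ we then have two metrics, $\tilde g=\Omega^2 g_{\NW}$ and the Minkowski metric, which coincide by \eqref{conformal_relation}.

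\textbf{Step 1 (conformal rescaling).} Equip $\mathcal{N}^*$ with $\tilde g$ and the rescaled orthonormal coframe $\widehat{\vartheta}^\mu=\Omega H^*\vartheta^\mu$, whose dual frame is $\Omega^{-1}U_\mu$. I would invoke the conformal covariance formula $D_{\Omega}=\Omega^{-5/2}D\,\Omega^{3/2}$ for $n=4$, computed in these corresponding frames. This formula is valid for twisted operators too. The twisting enters only through $U_\mu\mapsto U_\mu+\ui A(U_\mu)$. Under rescaling, $\Omega^{-1}U_\mu+\ui A(\Omega^{-1}U_\mu)=\Omega^{-1}(U_\mu+\ui A_\mu)$, so the connection term scales exactly like the derivative term and does not interfere with the spin-connection computation.

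Consequently, $\Omega^{-3/2}\Psi$ is annihilated by the $A$-twisted Dirac operator of $\tilde g$, written in the frame $\widehat{\vartheta}$. I would briefly justify the covariance formula itself, if needed, by the standard computation: the rescaled spin connection acquires the extra term $\tfrac12(\gamma^\mu\gamma^\nu-\gamma^\nu\gamma^\mu)\partial_\nu\ln\Omega$, and its Clifford contraction produces $\tfrac{n-1}{2}\gamma^\mu\partial_\mu\ln\Omega$.

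\textbf{Step 2 (change of orthonormal frame).} The metric $\tilde g$ is the Minkowski metric, but $D_{\mathcal M}$ is written in the frame $e$, which is related to $\widehat\vartheta$ by $\Lambda\in SO^+(1,3)$, as shown in Lemma~\ref{lem:Lambda}. Because $\Lambda$ is proper orthochronous, it lies in the identity component. The map $(U,\vec X)\mapsto\Lambda$ is defined on a contractible domain, so it lifts continuously to $\mathrm{Spin}^+(1,3)$, giving $S(\Lambda)$.

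The Dirac operator is independent of the choice of orthonormal frame, provided spinor components are transformed by the spin lift. Concretely, if $e^a=\Lambda^a{}_\mu\widehat\vartheta^\mu$, then the spin connections differ by the term $S\,\ud S^{-1}$. Combining this with $S^{-1}\gamma^aS=\Lambda^a{}_b\gamma^b$ gives the intertwining relation
$$D^{e}_{A}\bigl(S\chi\bigr)=S\,D^{\widehat\vartheta}_{A}\chi.$$
I would verify this either by citing frame-independence of the Dirac operator, or by a short direct computation of the $S\,\ud S^{-1}$ term. That computation is the main technical point, but it is a general identity that does not depend on the specific entries of $\Lambda$.

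\textbf{Step 3 (conclusion).} Apply the intertwining relation to $\chi=\Omega^{-3/2}H^*\Psi$. Since $\chi$ is annihilated by the operator in the $\widehat\vartheta$ frame by Step 1, the spinor $\Psi_{\mathcal M}=S(\Lambda)\chi$ is annihilated by $D_{\mathcal M,H^*A}$. The coupling is to $H^*A$ because the $U(1)$ connection is scalar and commutes with both $S$ and $\Omega$.

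I expect the main obstacle to be consistency of conventions in Step 2. The paper's $\gamma$ matrices satisfy $\{\gamma^\mu,\gamma^\nu\}=-2g^{\mu\nu}$, and the coframe ordering is $(1,2,3,0)$. One must therefore check that the stated condition on $S(\Lambda)$ is the correct direction for the transformation ($\Lambda$ versus $\Lambda^{-1}$) given how $e$ and $\widehat\vartheta$ are related. I would fix this by checking the infinitesimal case, $\Lambda=1+\epsilon$ with $S=1+\tfrac14\epsilon_{ab}\gamma^a\gamma^b$, with the signs matching the paper's Clifford convention.
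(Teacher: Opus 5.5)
Your three steps follow the route the paper intends. The paper gives no separate proof; it combines $D_{\mathcal M}=\Omega^{-5/2}H^*D_{\NW}\Omega^{3/2}$ with Lemma~\ref{lem:Lambda}. As general facts your steps are sound. In particular, $S^{-1}\gamma^aS=\Lambda^a{}_b\gamma^b$ is the correct direction for $e^a=\Lambda^a{}_\mu\widehat\vartheta^\mu$, because it is exactly what makes Clifford multiplication by $e^a$ agree in the two trivialisations.

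The gap is your claim in Step~1 that $\widehat\vartheta$ is dual to $\Omega^{-1}U_\mu$. With $b=0$ the paper has $U^0=(\sigma^4-\sigma^3)/\sqrt2$ but $\vartheta^0=(\sigma^3-\sigma^4)/\sqrt2$. So $\vartheta^0=-U^0$, and the frame dual to $\widehat\vartheta$ is $\Omega^{-1}(U_1,U_2,U_3,-U_0)$. You can see this at $U=0$, $\vec X=0$: there $\Omega=1$ and $\Lambda=I$, yet $H_*e_0=\tfrac{1}{\sqrt2}(\partial_\theta-\partial_t)=-U_0$. Hence the change from the frame in which $D_{\NW,A}$ is actually written (Section~2) to $e$ is $\Lambda\,\mathrm{diag}(1,1,1,-1)$ in the ordering $(1,2,3,0)$. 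This has determinant $-1$ and reverses time orientation, so no element of $\mathrm{Spin}^+(1,3)$ covers it. This is not the $\Lambda$ versus $\Lambda^{-1}$ issue you anticipated, and an infinitesimal check will not detect it.

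The reflection is not harmless. Let $Q=\gamma^1\gamma^2\gamma^3$. It commutes with $\gamma^1,\gamma^2,\gamma^3$, anticommutes with $\gamma^0$, satisfies $Q^2=1$, and exchanges chiralities. The Dirac operator in the reflected frame is $QDQ$, so your chain actually yields $D_{\mathcal M,H^*A}\bigl(S(\Lambda)\Omega^{-3/2}H^*\Psi\bigr)=S(\Lambda)\,Q\,\Omega^{-5/2}H^*\bigl(D_{\NW,A}(Q\Psi)\bigr)$, which need not vanish.

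For example, take $A=0$ and the constant spinor $\Psi=(0,0,1,0)^T$. It is harmonic on $\NW$, since the third column of the spin-connection matrix is zero. But $Q\Psi=(-\ui,0,0,0)^T$ meets the nonzero $(3,1)$ entry, so $D_{\NW}(Q\Psi)\neq 0$.

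To close the argument you have two options. You can declare $D_{\NW,A}$ to be the operator written in the frame dual to $\vartheta$; then Steps~1--3 go through verbatim, but that is not the operator for which Theorem~\ref{thm: vort to spinor NW} produces zero modes. Alternatively, keep the Section~2 operator and replace $S(\Lambda)$ by the odd Pin element $S(\Lambda)Q$. Then $D_{\mathcal M,H^*A}\bigl(S(\Lambda)Q\chi\bigr)=S(\Lambda)Q\,D_{\Omega^{-1}U,A}\chi$, and right-handed zero modes on $\NW$ become left-handed ones on $\mathcal M$. The paper's corollary rests on the same tacit frame identification, so this is a convention inconsistency you must resolve explicitly, not one you can cite away.
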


To write down a spin lift of the Lorentz transformation \(\Lambda(U,\vec X)\), it is convenient to use the Weyl basis adapted to the ordering \((1,2,3,0)\). Define
\begin{equation}
\tau^1=\sigma_1,\qquad
\tau^2=\sigma_2,\qquad
\tau^3=\sigma_3,\qquad
\tau^0=I_2,
\end{equation}
and
\begin{equation}
\bar\tau^1=-\sigma_1,\qquad
\bar\tau^2=-\sigma_2,\qquad
\bar\tau^3=-\sigma_3,\qquad
\bar\tau^0=I_2.
\end{equation}
Then the gamma matrices are
\begin{equation}
\Gamma^a=
\begin{pmatrix}
0 & \tau^a\\
\bar\tau^a & 0
\end{pmatrix},
\qquad a\in\{1,2,3,0\}.
\end{equation}

Introduce
\begin{align}
  \beta(U,\vec X)&:=\frac{X^2+\ui X^1}{2(4+U^2)^{1/4}},\\  
  \chi(U)
&:=
\exp\!\left(\frac{\ui}{2}\arctan\frac{U}{2}\right)
=
\frac{\sqrt{4+U^2}+2+\ui U}
{\sqrt{2\sqrt{4+U^2}\,(\sqrt{4+U^2}+2)}}.
\end{align}

A direct computation shows that the matrix
\begin{equation}
B(\Lambda)=
\begin{pmatrix}
\sqrt{\Omega}\,\chi & \beta\,\chi^{-1}\\[4pt]
0 & \Omega^{-1/2}\,\chi^{-1}
\end{pmatrix}
\label{eq:A-spin-lift}
\end{equation}
belongs to \(SL(2,\C)\) and satisfies
\begin{equation}
B(\Lambda)\,\tau^a\,B(\Lambda)^\dagger
=
\Lambda^a_{\;\;b}\,\tau^b.
\end{equation}

Hence a spin lift of \(\Lambda\) is given by
\begin{equation}
S(\Lambda)=
\begin{pmatrix}
B(\Lambda)^{-1} & 0\\[4pt]
0 & B(\Lambda)^\dagger
\end{pmatrix} = 
\begin{pmatrix}
\Omega^{-1/2}\chi^{-1} & -\beta\,\chi^{-1} & 0 & 0\\[4pt]
0 & \sqrt{\Omega}\,\chi & 0 & 0\\[4pt]
0 & 0 & \sqrt{\Omega}\,\chi^{-1} & 0\\[4pt]
0 & 0 & \bar\beta\,\chi & \Omega^{-1/2}\chi
\end{pmatrix}.
\label{eq:explicit-spin-lift}
\end{equation}

\subsection{Examples}

\begin{ex}
Consider the example from Section~\ref{sec: vortices}, where a Jackiw--Pi vortex on $\C$ with $f(z) = z^2$ is lifted to $\NW$ and the fields are given by \eqref{eq: JP on NW example} as
\begin{equation}
\Phi = \pi^{*}\phi =\frac{16\left(y-\ui x\right)}{16+(y^{2}+x^{2})^{2}}, \qquad
A=-\pi^{*}a=4\frac{x^{2}+y^{2}}{16+\left(x^{2}+y^{2}\right)^{2}}\left[y\ud x-x\ud y\right].
\end{equation}
From Theorem~\ref{thm: vort to spinor NW}, the corresponding right-handed Weyl spinor on $\NW$ is
\begin{equation}
\Psi_R =
\begin{pmatrix}
\frac{16\left(y-\ui x\right)}{16+(y^{2}+x^{2})^{2}} \\[6pt]
0
\end{pmatrix}.
\end{equation}
Using the conformal map $H:\mathcal{M}\to\mathcal{N}^*$, we have
\begin{equation}
y - \ui x = \frac{2}{4 + U^2} \left[ (U - 2\ui) X^1 + (2 + \ui U) X^2 \right], \quad x^2 + y^2 = \frac{4 r^2}{4 + U^2},
\end{equation}
with $r^2 = (X^1)^2 + (X^2)^2$. This gives
\begin{equation}
H^* \Phi = \frac{ 2 \left[ (U - 2 \ui ) X^1 + (2 + \ui U) X^2 \right] (4 + U^2) }{ (4+U^2)^2 + r^4 }.
\end{equation}
The conformally rescaled Dirac spinor is
\begin{equation}
\widetilde{\Psi}
=
\Omega^{-\frac{3}{2}}
\begin{pmatrix}
0\\
0\\
H^*\Phi \\
0
\end{pmatrix},
\end{equation}

Applying the spin lift $\Psi_{\mathcal{M}} = S(\Lambda)\,\widetilde{\Psi}$, the norm becomes
\begin{equation}
\vert\Psi_{\mathcal{M}}\vert^{2}
= \left(\frac{2(4+U^2) + r^2}{4\sqrt{4+U^2}}\right)\Omega^{-3} \vert H^*\Phi \vert^2 =\frac{8r^{2}(4+U^2)\left[2(4+U^2)+r^2\right]}{\left((4+U^{2})^2+r^4\right)^{2}}.
\label{eq: Mink spinor norm}
\end{equation}
The norm of the spinor in terms of the coordinates $U,r=\sqrt{(X^{1})^{2}+(X^{2})^{2}}$ is shown in figure \ref{fig: spinor norm plot}.

\begin{figure}
    \centering
    \includegraphics[width=0.7\linewidth]{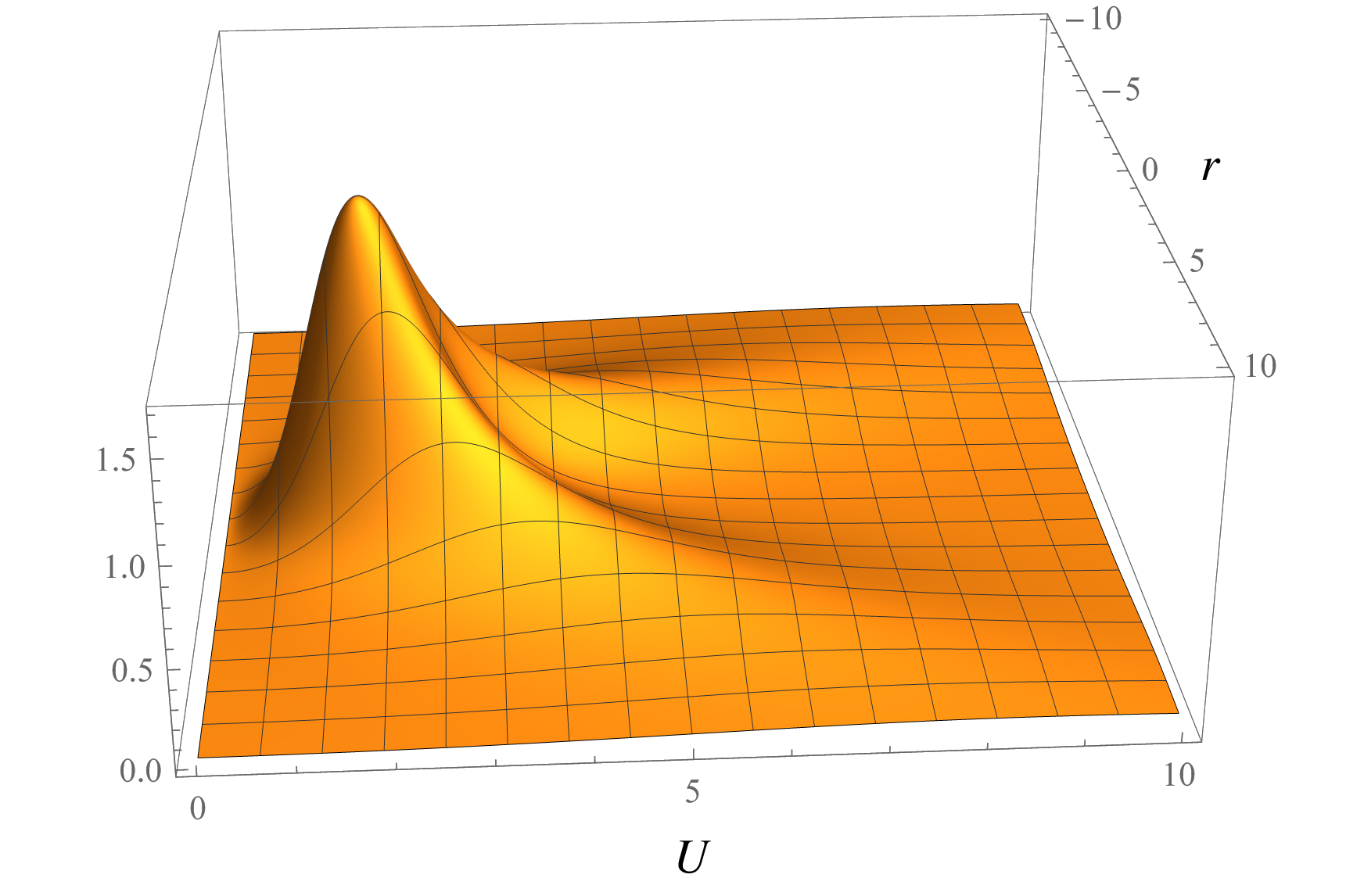}
    \caption{A plot of the norm of the spinor $\vert\Psi_{\mathcal{M}}\vert^{2}$ from \eqref{eq: Mink spinor norm} in terms of the Minkowski coordinates $U,\, r=\sqrt{(X^{1})^{2}+(X^{2})^{2}}$. }
    \label{fig: spinor norm plot}
\end{figure}
\end{ex}

\begin{ex}
More generally, consider the Jackiw--Pi vortex on $\C$ with $f(z) = z^m$ for topological charge $m$:
\begin{equation}
\phi = \frac{m z^{m-1}}{1 + |z|^{2m}},
\qquad
a = \frac{m \ui |z|^{2m-2}}{1 + |z|^{2m}} (z \ud \bar{z} - \bar{z} \ud z).
\end{equation}
Lifted to $\NW$, the vortex configuration is:
\begin{equation}
\Phi = \pi^* \phi = \frac{m2^{m+1} (y - \ui x)^{m-1}}{2^{2m} + (x^2 + y^2)^m},
\qquad
A = -\pi^* a = \frac{2m(x^2 + y^2)^{m-1}}{2^{2m} + (x^2 + y^2)^m} [y \ud x - x \ud y].
\end{equation}
The corresponding right-handed Weyl harmonic spinor on $\NW$ is
\begin{equation}
\Psi_R =
\begin{pmatrix}
\frac{m2^{m+1} (y - \ui x)^{m-1}}{2^{2m} + (x^2 + y^2)^m} \\[6pt]
0
\end{pmatrix},
\end{equation}
\begin{equation}
A_c = A + c \sigma^3.
\end{equation}
Pulling back to Minkowski space gives
\begin{equation}
H^* \Phi =
\frac{m (4 + U^2) \left[ (U - 2\ui)X^1 + (2 + \ui U)X^2 \right]^{m-1}}
{(4+U^2)^m + r^{2m}},
\end{equation}
\begin{equation}
r^2 = (X^1)^2 + (X^2)^2.
\end{equation}
The conformally rescaled Dirac spinor is
\begin{equation}
\widetilde{\Psi}
=
\Omega^{-\frac{3}{2}}
\begin{pmatrix}
0\\
0\\
H^*\Phi \\
0
\end{pmatrix}.
\end{equation}

Applying the spin lift $\Psi_{\mathcal{M}} = S(\Lambda)\,\widetilde{\Psi}$, we obtain
\begin{equation}
\vert\Psi_{\mathcal{M}}\vert^{2}
= \left(\frac{2(4+U^2) + r^2}{4\sqrt{4+U^2}}\right)\Omega^{-3} \vert H^*\Phi \vert^2.
\end{equation}
Using
\begin{equation}
\left| (U - 2\ui)X^1 + (2 + \ui U)X^2 \right|^2 = (4+U^2)r^2,
\end{equation}
a direct computation yields
\begin{equation}
\vert\Psi_{\mathcal{M}}\vert^{2}
=
\frac{2m^2 r^{2m-2} (4+U^2)^{m-1} \left[2(4+U^2)+r^2\right]}
{\left((4+U^{2})^m+r^{2m}\right)^{2}}.
\end{equation}
Setting $m=2$ recovers \eqref{eq: Mink spinor norm}.
\end{ex}

\section{Summary and Outlook}
\label{Sec: summary}

This paper provides a construction of harmonic spinors on Minkowski space in terms of Jackiw--Pi vortices on $\R^{2}$. The approach developed here can be viewed as a four-dimensional extension of the framework introduced in \cite{Ross2021,RS2,RS1}, and completes the geometric correspondence between integrable vortex equations \cite{Manton:2016waw} and harmonic spinors.

From a physical perspective, our results give an explicit construction of spinor fields on $\R^{1,3}$ solving the massless Dirac equation in the presence of a background Abelian magnetic field. The intermediate construction on the Nappi--Witten space yields, to our knowledge, the first explicit examples of harmonic spinors on $\NW$.

In the spherical case \cite{RS1}, vortex harmonic spinors were related to ultra-cold atom systems involving linked magnetic fields. In the hyperbolic case \cite{RS2}, they were used in the construction of Yang--Mills fields from data on anti-de Sitter space \cite{Hirpara:2023htg,Kumar:2023sia}. It would be interesting to investigate whether the four-dimensional spinors constructed here admit a similar physical interpretation or application.


\section*{Acknowledgments}

CR would like to thank Bernd Schroers for many useful discussions about magnetic zero-modes and vortices, and Lennart Schmidt for some useful discussions at a very early stage of this project. RSG would like to thank Ivo Terek for many helpful discussions on pp-waves.

\bibliographystyle{abbrv}
\bibliography{nw_harmonic_spinors}
\end{document}